\definecolor{darkblue}{rgb}{0, 0, 0.5}
\DeclareMathOperator{\Tr}{{Tr}}
\DeclareMathOperator{\Lin}{{Lin}}
\DeclareMathOperator{\LinSpan}{{LinSpan}}
\DeclareMathOperator{\End}{{End}}
\DeclareMathOperator{\Fock}{{\Gamma}}
\DeclareMathOperator{\Range}{{\textsc{Range}}}
\DeclareMathOperator{\Dom}{{Dom}}
\newcommand{\tsr}{\otimes}
\newcommand{\closure}[1]{\overline{#1}}
\newcommand{\ket}[1]{{|#1\rangle}}
\newcommand{\inner}[1]{{\langle #1\rangle}}
\newcommand{\ip}[1]{{\langle #1\rangle}}
\newcommand{\NN}{{\mathds{N}}}
\newcommand{\laplace}{\mathop{}\!\mathbin\bigtriangleup}
\newcommand{\wtilde}[1]{\widetilde{#1}}
\newcommand{\grad}{\nabla}
\newcommand{\Div}{\text{div}}
\newcommand{\footurl}[1]{{\footnote{\href[#1]{#1}}}}
\newcommand{\one}{{\textbf{1}}}
\newcommand{\norm}[1]{\left\lVert#1\right\rVert}
\newcommand{\vertiii}[1]{{\left\vert\kern-0.25ex\left\vert\kern-0.25ex\left\vert #1 \right\vert\kern-0.25ex\right\vert\kern-0.25ex\right\vert}}
\newtheorem{thm}{Theorem}[section]
\newtheorem{definition}[thm]{Definition}
\newtheorem{prop}[thm]{Proposition}
\newtheorem{remark}[thm]{Remark}
\newtheorem{claim}[thm]{Claim}
\newtheorem{estimate}[thm]{Estimate}
\newtheorem{theorem}[thm]{Theorem}
\newtheorem{lemma}[thm]{Lemma}
\newtheorem{corollary}[thm]{Corollary}
\theoremstyle{definition}
\newcommand{\E}{\mathrm{E}}
\newcommand{\CStar}{\emph{C}\textsuperscript{*}}
\newcommand{\eps}{\epsilon}
\newcommand{\Fc}{{\mathcal{F}}}
\newcommand{\Ec}{{\mathcal{E}}}
\newcommand{\Uc}{{\mathcal{U}}}
\newcommand{\Ac}{{\mathcal{A}}}
\newcommand{\Bc}{{\mathcal{B}}}
\newcommand{\Wc}{{\mathcal{W}}}
\newcommand{\Vc}{{\mathcal{V}}}
\newcommand{\Lc}{{\mathcal{L}}}
\newcommand{\Kc}{{\mathcal{K}}}
\newcommand{\C}{{\mathcal{C}}}
\newcommand{\B}{{\mathcal{B}}}
\newcommand{\Hs}{{\textsf{H}}}
\newcommand{\Vs}{{\textsf{V}}}
\newcommand{\CC}{\mathbb{C}}
\newcommand{\KK}{\mathbb{K}}
\newcommand{\RR}{\mathds{R}}
\newcommand{\RHS}{\emph{r.h.s.}}
\newcommand{\txt}[1]{\text{#1}}
\newcommand{\hsep}{\begin{center}
{\rule{0.5\linewidth}{0.5pt}}
\end{center}
}
\newcommand{\mat}{\textsc{Mat}}
\newcommand{\namecite}[1]{\citet{#1}}
\newcommand{\SwapOp}[1]{{\txt{Swap}_{#1}}}
\newcommand{\DenseIn}{{\subset_\txt{{dense}}}}
\newcommand{\Sect}{\Gamma}
\newcommand{\ExpHVec}{\textsc{E}}
\newcommand{\sectiontitlefont}[1]{#1}
\newcommand{\Rem}[1]{} 
\newcommand{\FRem}[1]{} 
\newcommand{\FRemHistorical}[1]{}
\newcommand{\FOnlineRef}[1]{\FRem{\href{#1}{online}}} 
\newcommand{\ToDo}[1]{{}}
\begin{document}

\title{\textsc{Spectral action and heat kernel trace for Ricci flat manifolds from stochastic flow over second quantized $L^2$-differential forms. 
}}
\author{Sita Gakkhar\footnote{Sitanshu Gakkhar, \href{mailto:sgakkhar@caltech.edu}{email}. The first author would like to thank Debashish Goswami for helpful discussions on qsde's of Evans-Hudson type, and also Branimir {\'C}a{\'c}i{\'c} for discussions on Dirac operators}, Matilde Marcolli,\\}
\maketitle

\begin{abstract}
A quantum stochastic differential equation (qsde) on Fock space over $L^2$ differential $1$-forms is given from the small ``time'' flow of which the trace of the connection Laplacian heat kernel for the spinor endomorphism bundle can be computed over any compact Ricci-flat Riemannian manifold. The existence of the stochastic flow is established by adapting the construction from \cite{goswami_qsp}. When the manifold supports a parallel spinor -- Ricci-flatness is a required integrability condition for parallel spinors, the trace of Dirac Laplacian heat kernel of the spinor bundle can be recovered. For $4$-manifolds, this corresponds to the spectral action, and realizes Einstein-Hilbert action as a stochastic flow. 
\end{abstract}


\section{Introduction}

This article attempts to formalize the idea that Einstein-Hilbert action for a Riemannian manifold can be viewed as arising from random fluctuations acting on the spinor bundle. The model of random fluctuations is provided by a stochastic flow generated by the Dirac Laplacian. The connection between Dirac Laplacian, noncommutative geometry and gravity is well established (see \cite{tolksdorf_analysis_of_dirac_operators} and references therein for an account), the new contribution here is the probabilistic perspective. 

This realization of spectral action as a stochastic flow is suggested by explicit computations of spectral action which yield Brownian bridge integrals along with the observation that boson Fock space can be viewed as a Wiener space, with a preliminary exploration of the perspective put forward in \cite{gakkhar2022quantum} where covariant quantum diffusions on almost-commutative spectral triples are considered. This article treats canonical spectral triples, computing explicitly the structure matrix for the diffusion generated by the endomorphism connection Laplacian acting on the algebra of functions over any compact Riemannian manifold. In absence of a natural action with respect to which the generator is covariant, the standard constructions using Picard iterates (see \namecite{goswami_qsp}, and also \cite{belton2015algebraic}) care adapted to show the existence of the solution. From this flow, when there exists a parallel spinor, spectral action can be evaluated. 

\emph{Some remarks on notation.} By Riemannian $(M, g)$, we mean a Riemannian manifold $M$ with metric $g$. The connection on the tangent bundle of $M$, $TM$, is always the Levi-Civita connection unless specified otherwise. When clear from context, the same symbol is used of the connection $\grad$ on a Hermitian or Riemannian bundle $E$ and the dual connection on dual bundle $E^*$. After fixing a local orthonormal frame about any $p\in M$, $(X_i)_{i\in \dim M}$, $\grad_{X_i}$ will be used interchangeably with $\grad_i$. For local coordinates $(x_i)$ about any $p\in M$, $\partial_i$ will denote the coordinate vector fields $\frac{\partial}{\partial x_i}$. $[n]$ is the set $\{i\in \NN, i\leq n\}$ where $\NN$, with convention that $0\not\in \NN$.  $[n:m]$ denotes the set $\{n, n+1\dots m\}$. 
The linear span we mean finite linear span denoted by $\LinSpan(\Vs):=\{\sum_{i\in [k]}\alpha_ia_i: \alpha_i\in \KK, a_i\in \Vs\}$ where $\KK=\RR, \CC$ is clear from  context; if $\KK=\RR$ then the subscript is dropped. Throughout $\Fock(H)$ denotes the symmetric (boson) Fock space over the any space $H$, while $\ExpHVec(H)$ denotes the exponential vectors given by $\ExpHVec(v) = \oplus_{n=0}^\infty (n!)^{-1/2}v^{\tsr n}$ for $v\in H$. From section~\ref{section_growth_bounds} onwards, we make use of Einstein notation. All calculation prior to section~\ref{section_growth_bounds} are using at the center of normal coordinates where the metric is identity, for clarity, the explicit summations are used. 

\subsection{Organization and overview}

In the remainder of this section, we introduce the background on stochastic flows and spectral action, and delve into motivating ideas. In section~\ref{section_structure_matrix}, the structure matrix for connection Laplacian is computed following the standard prescription. The noise space turns out to be $L^2$ differential forms and the flow lives on the Fock space over the differential forms. The section concludes by writing the structure matrix and the qsde for the stochastic flow in the coordinate free quantum stochastic calculus notation; the relevant material from quantum stochastic differential equations an appendix reviewing quantum stochastic differential equations is included at the end (appendix \ref{section_mapvalued_qsdes}). Necessary bounds for controlling growth of Laplacian powers are established in section~\ref{section_growth_bounds}. The existence of flows for the derived qsde is established in section~\ref{section_picard_iterates_convergence} by providing estimates that can be plugged into the standard theory. 

\subsection{Spectral action and stochastic flows}

The canonical spectral triple for Riemannian $(M, g)$ which carries a spin structure is the data $(C^\infty(M), L^2(S), D_M)$, $L^2(S)$ being the Hilbert space of square integrable sections of a spinor bundle $S\to M$, and $D_M$ the Dirac operator associated to the lift of Levi-Civita connection to the spinor bundle\cite[pg~67]{suijlekom_ncg}. We will take $S$ to be any spinor bundle associated to $TM$ and $D_M$ to be the Atiyah-Singer operator $\slashed{D}$ (see \cite[ex~II.5.9]{lawson_spin}), then we have that $\slashed{D}^2=\laplace + \kappa/4$ where $\laplace$ is the connection Laplacian for the connection on $S$ and $\kappa$ is the scalar curvature\cite[Thm~II.8.8]{lawson_spin}. 

The bosonic spectral action is the linear funtional $S_b^M:=\Tr f(D_M/\Lambda)$ for a choice of test function $f$ which we take as $e^{-x^2}$ and $\Lambda$ a cutoff parameter\cites[\S~5.1]{ncg_standard_model}[\S~7.1]{suijlekom_ncg}, so that the parameter $t$ of the Dirac heat semigroup $e^{-tD_M^2}$ will be taken to satisfy $t=\Lambda^{-2}$. From the asymptotic expansion $\lim_{\Lambda\to \infty} S_b^M$ for a Riemannian spin $4$-manifold $M$, the Einstein-Hilbert action, $S_{EH}$, can be recovered\cites[\S~5.3]{ncg_standard_model}[\S~8.3]{suijlekom_ncg}. 

Since $D_M, D_M^2$ are self-adjoint on $L^2(S)$, suppose $(\phi_{i,j})$ is an basis of orthonormal eigensections with eigenvalues $\lambda_{i,j}^2$ for $D^2_M$ where $j\in [n_i]$ runs over the multiplicity $n_i$ of eigenvalue $\lambda^2_{i,j}:=\lambda^2_i$ 
Define the state $\Phi^z=  1/N(z)\sum_{i:\lambda_i \leq z } \sum_{j\in [n_i]} \phi_i^j$ with normalization $N(z):=\sqrt{\sum_{i:\lambda_i\leq z} n_i}$ then 
\begin{align*}
\lim_{z\to \infty}
\inner{\Phi^z, e^{-tD_M^2}\Phi^z}N(z) &= \sum_{i:\lambda_i\leq z}\sum_{j\in n_i}\inner{\phi_{i,j}, e^{-tD_M^2}\phi_{i,j}} =
\lim_{z\to \infty} \inner{\Phi^z, e^{-tD_M^2}\Phi^z} = \Tr e^{-tD_M^2}
\end{align*}
Therefore, the bosonic spectral action can be approximated by expectation of small time expectation of $e^{-t(D_M)^2}$ in state $\Phi^z$ for $z$ large: $$
S_{EH}\approx \lim_{\sqrt{t}\to 0}\Tr e^{-tD_M^2}  = \lim_{\sqrt{t}\to 0}\lim_{z\to \infty}
\inner{\Phi^z, e^{-tD_M^2}\Phi^z}N(z)  
$$
This motivates the interest in evaluating $\inner{\Phi^z, e^{-tD_M^2}\Phi^z}$. The approach we take is that of a quantum stochastic dilation associated to the heat semigroup, $e^{-tD^2_M}$, which is a quantum dynamical semigroup by \cite{gakkhar2022quantum}. 

The issue is complicated by the fact that for the existence of Evans-Husdon flow, the generator must annihilate identity, that is, the semigroup must be conservative. The Dirac laplacian acting on endomorphisms by composition does not satisfy this. We instead have to work with the endomorphism connection and the associated endomorphism connection laplacian and endomorphism Dirac laplacian and then extract the spinor bundle Dirac laplacian from it (see \cite{gakkhar2022quantum} more detailed discussion on this). Because of Ricci flatness, $\slashed{D}^2=\laplace$, the semigroup of interest will be $e^{-t\laplace^\End}$ for the endomorphism connection laplacian $\laplace^\End$. 

\begin{remark}If the spinor bundle could be replaced by the Clifford bundle, then the associated Dirac laplacian does generate a flow of Evans-Hudson type and the endomorphism trick is not needed. \end{remark} 

A quantum stochastic dilation of the Evans-Hudson type\footnote{see appendix \ref{section_mapvalued_qsdes} for a brief review of quantum stochastic processes and integrals, and \cite{goswami_qsp,parthasarathy_qsp} for detailed standard theory} for the quantum dynamical semigroup $e^{-t\Lc}$ on the \CStar-algebra $\Ac:=\C(M)\subset \End(L^2(S))$ is a family of $*$-homomorphisms $j_t:\Ac\to \Ac''\tsr \Bc(\Fock(\RR_+, k_0))$ where $k_0$ is a Hilbert space, called the noise (or multiplicity) Hilbert space that is constructed from the generator $-\Lc$,  such that the following holds \begin{itemize}
    \item There exists  an ultra-weak dense $*$-subalgebra $\Ac_0\subset \Ac$ such that the map-valued process $J_t:\Ac\tsr \ExpHVec(L^2(\RR_+, k_0))\to \Ac''\tsr \Fock(L^2(\RR_+, k_0))$ with $J_t(x\tsr \ExpHVec(f))u := j_t(x)(u\ExpHVec(f))$ for $x\in \Ac, u\in \Hs:=L^2(S)$, $f\in L^2(\RR_+, k_0)$, satisfying the qsde: \begin{align}\label{eq_evans_hudson_qsde}
    dJ_t = J_t\circ (a_\delta(dt) + a_\delta^\dagger(dt) + \Lambda_\sigma(dt) + I_\Lc(dt)), J_0 =\one
    \end{align}
    on $\Ac_0\tsr \ExpHVec(L^2(\RR_+,k_0))$ where $\delta:\Ac_0\to \Ac\tsr k_0, \sigma:\Ac_0\to \Ac\tsr \Bc(k_0)$ are linear maps, called the structure maps for the qsde,  derived from the generator $\Lc$ for $e^{-tD_M^2}$, $\Ac_0\subset \Dom(\Lc)$, and $a_\delta, a^\dagger_\delta, \Lambda_\sigma, I_\Lc$ the fundamental processes with respect to which stochastic integral is defined. 
        \item For all $u, v\in \Hs, x\in \Ac$,
        \begin{align}
    \inner{v\ExpHVec(0), j_t(x)\ExpHVec(0)} = \inner{v, e^{-t\Lc}u}
    \end{align}
\end{itemize}

Therefore, $\inner{\Phi^z, e^{-t\Lc}\Phi_z}$ is realized as operator algebraic expectation of $j_t$ with respect vacuum state $\ket{\Phi^z\ExpHVec(0)}\in L^2(S)\tsr \Fock(L^2(\RR_+, k_0))$, where $j_t$ is obtained from the flow $J_t$ for the Evans-Hudson qsde (equation~\ref{eq_evans_hudson_qsde}). Schemes for solving for Evans-Hudson qsde, for example, using Picard iterates, provide a way to algorithmically construct the flow. 

\subsection{Dirac Laplacian, endomorphism connection and parallel spinors}

To start we note the following sign conventions of the Laplacians. Primarily the signs are fixed so the Laplace-Beltrami operator has non-negative spectrum, and signs on all other Laplacians cascade from there. On Riemannian $(M, g)$,  $M$ compact, without boundary, $\Tr_g$ denotes the trace of a covariant tensor\FRem{\cite[pg~28]{lee_riemannian}} taken after identifying with a contravariant tensor via the metric $g$, $\Tr_g(h) := g^{ij}h_{ij}$. Note that trace on contravariant tensor, e.g. vector fields, is simply the sum. For $X\in \Sect(TM)$, $\Div(X) = \Tr(\grad X)$. The Laplace-Beltrami operator is taken as the operator with non-negative spectrum, that is, $-\Div(\grad) = -\Tr(\grad_{\cdot,\cdot})$, where $\grad_{\cdot,\cdot}$ is the second invariant derivative $\grad^2_{V, W}:=\grad_{V}\grad_{W} - \grad_{\grad_VW}:\Sect(E)\to \Sect(E)$. The connection Laplacian is $\grad^*\grad$ where $\grad^*$ is adjoint of the connection $\grad:\Sect(E)\to \Sect(E)\tsr T^*M$. Equivalently, $\grad^*\grad = -\Tr(\grad_{\cdot,\cdot})$. Further, $\laplace = -g^{ij}\grad_i\grad_j$.

Let $\grad$ be any connection on the vector bundle $E\to M$, $(M, g)$ a compact Riemannian manifold. The connection Laplacian at $p\in M$, $\laplace=\grad^*\grad$ in local coordinates\FRem{\cite[pg~66]{berline_heat}} $(e_i)$ is given by $\laplace=-(\sum_i \grad_i\grad_i - \grad_{\grad_i e_i}$). To evaluate $\laplace \phi$ at any $p\in M$ and $\phi\in \Sect(E)$, we will use Riemann  normal coordinates centered at $p$ so $\grad_i e_j$ vanish, yielding $\laplace \phi(p) = -\sum_i \grad_i\grad_i\phi(p)$.

The endomorphism connection $\grad^\End$ on the bundle $\End(E)=E\tsr E^*$ associated to a connection $\grad$ on the Hermitian vector bundle $E$ over the Riemannian manifold $M$ is such that for $X\in TM, \grad_X^\End = \grad_X\tsr 1 + 1\tsr \wtilde\grad_X$, where $\wtilde\grad$ is the dual connection on $E^*$. 
The endomorphism Laplacian is defined as usual: at $p\in M$ in Riemann normal coordinates centered at $p$ (denoting $\wtilde\grad^*\wtilde\grad, \wtilde\grad$ by   $\laplace, \grad$ again), $$
\laplace^{\End} = -\sum_i \grad_i^\End \grad_i^\End = -\left(\laplace\tsr 1 + 2\sum_i\grad_i\tsr \grad_i + 1\tsr \laplace\right)$$
Note that as $E\tsr E^*$ is balanced over $\C(M)$, the action of $\C(M)$ on $\End(E)$ can be written as $f\cdot 1_{\End} = \sum_i(f\cdot h_i)\tsr h_i^*$; this convention is used for all computation with Laplacian expressed in this tensor form. It's also very useful to note that in any local coordinates , $\grad^\End$ acts by commutator: if over chart $U$, the connection has potential $A$, $\grad = d + A$, then for a local orthonormal frame $(\mu_i)$ and dual frame $(\mu^j)$, $\grad^{\End}\sum_{ij}\sigma^i_j\mu_i\tsr \mu^j =
\sum_{ij}(d\sigma_j^i)\mu_i\tsr \mu^j +
\sum_{jk}[\sigma A - A\sigma]_{jk}\mu_k\tsr \mu^j$. In particular, since $\one_\End$ is given by the identity matrix locally, it follows (see \cite{gakkhar2022quantum}) that $\grad^\End(\one_\End)=0$. This implies that again in normal Riemann coordinates centered at $p$ yields that for any $f\in \C^\infty(M)$, \begin{align} \laplace^\End(f\one_\End)(p) &= -\sum_i\grad^\End_i\grad^\End_i (f\cdot \one_\End)(p) = \sum_i\grad_i^\End\left(\partial_i f\one_\End\right) = -\sum_i\partial_i\partial_if\cdot \one_\End(p)
\end{align}

\begin{prop}\label{prop_laplace_beltrami_ise_endomorphsim_Laplacian} For $f\in \C^\infty(M)$,   $\laplace^{\End(E)}(f\cdot \one_\End) = \laplace^M(f)\cdot\one_\End$.
\end{prop}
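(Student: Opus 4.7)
The plan is to prove the identity pointwise, using Riemann normal coordinates at an arbitrary $p\in M$ to simultaneously kill the connection potential on $E$ and the Christoffel symbols on $M$, so that both sides reduce to the same flat-space expression.

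First I would invoke the Leibniz rule for $\grad^\End$: for any vector field $X$ and $f\in C^\infty(M)$,
\begin{equation*}
\grad^\End_X(f\cdot \one_\End) = X(f)\,\one_\End + f\cdot \grad^\End_X(\one_\End).
\end{equation*}
Using the already-established fact that $\grad^\End(\one_\End)=0$ (which follows from $\one_\End$ being given by the identity matrix in any local frame, so that the commutator formula in the excerpt forces the potential-part to vanish), this collapses to $\grad^\End_X(f\one_\End) = X(f)\,\one_\End$.

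Next, fix $p\in M$ and work in Riemann normal coordinates $(x_i)$ centered at $p$, with $\partial_i = \pddx{x_i}$. In these coordinates one has $g^{ij}(p)=\delta^{ij}$ and the Christoffel symbols vanish at $p$, so the second-order term $\grad_{\grad_ie_i}$ drops out. Applying the Leibniz computation above twice and evaluating at $p$ gives
\begin{equation*}
\laplace^\End(f\cdot\one_\End)(p) = -\sum_i \grad^\End_i\grad^\End_i(f\cdot\one_\End)(p) = -\sum_i \grad^\End_i\bigl((\partial_i f)\one_\End\bigr)(p) = -\sum_i \partial_i\partial_i f(p)\cdot \one_\End,
\end{equation*}
which is exactly the displayed equation immediately preceding the proposition statement.

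Finally I would compare to the right-hand side. Using the formula $\laplace^M = -g^{ij}\grad_i\grad_j$ from the sign-convention discussion, and evaluating at $p$ where $g^{ij}(p)=\delta^{ij}$ and $\Gamma^k_{ij}(p)=0$, the Laplace–Beltrami operator reduces to $\laplace^M f(p) = -\sum_i \partial_i\partial_i f(p)$. Thus both sides agree at $p$; since $p$ was arbitrary, the identity holds as sections. No real obstacle arises here: the content has essentially been packaged into the two preceding displayed computations, and the proof is really just a matter of invoking $\grad^\End(\one_\End)=0$ together with the choice of normal coordinates to match the two local coordinate expressions.
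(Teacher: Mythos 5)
Your proof is correct and rests on the same two facts the paper uses, namely $\grad^\End(\one_\End)=0$ and the Leibniz rule $\grad^\End_X(f\one_\End)=X(f)\one_\End$, followed by a comparison of local-coordinate expressions for the two Laplacians. The only (cosmetic) difference is that you specialize to Riemann normal coordinates at an arbitrary $p$ so the Christoffel terms vanish, whereas the paper carries them in general local coordinates and observes that the $\grad^\End_k$ and $\partial_k$ Christoffel contributions match term by term; both are complete.
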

\begin{proof} 
Let $\Gamma^{k}_{ij}$ be the Christoffel symbols for Levi-Civita connection, then in local coordinates about $x\in M$, $\laplace^M(f) = -\sum_{ij}g^{ij}(x)(\partial_i\partial_j - \sum_k \Gamma^k_{ij}\partial_k)f$ (see, for instance, \cite[pg~66]{berline_heat}) and for the endomorphism Laplacian,   
\begin{align*}
\laplace^\End(f\cdot 1) &= -\sum_{ij}g^{ij}(x)(\grad^\End_i\grad^\End_j - \sum_k \Gamma^{k}_{ij}\grad^\End_k)(f\cdot\one_\End) \\
&= -\left(\sum_{ij}g^{ij}(x)(\partial_i\partial_j - \sum_k \Gamma^k_{ij}\partial_k)f\right)\cdot \one_\End =\laplace^M(f)\one
\end{align*}
where we used $\grad^\End(\one_\End)=0, \grad_X^\End(f)\cdot \one_\End = X(f)\cdot \one_\End$. 
\end{proof}

Now for any constant $c$, $\Tr(e^{-t(\laplace+c)})$ is just $e^{-ct}\Tr(e^{-t\laplace})$. If there exists a parallel section $\phi$ for $\grad$ (equivalently for the connection on the dual bundle), then with $H_\phi = \closure{\{s\tsr\phi:s\in \Gamma(E)\}}$, and appropriate normalization on $\phi$, \begin{align}\label{eq_trace_from_endomorphism_trace} \Tr(e^{-t\laplace}) = \Tr|_{H_\phi}(e^{-t\laplace^{\End}})\end{align} 
The parallel section, therefore, allows one to extract the heat kernel trace $\Tr(e^{-t\laplace})$ from the $\Tr(e^{-t\laplace^\End})$. 

In the setting of the canonical spectral triple, we specialize to the spinor bundle. The existence of a parallel spinor constrains the holonomy of Levi-Civita connection of the manifold. In particular, this implies for Riemannian $(M, g)$ that the Ricci tensor vanishes\cite[\S~6.3]{ofarrill_spin}, therefore, $D^2=\laplace$. 

\begin{remark}{A remark on the Lorentzian and Kahler analogs:} the existence of a parallel spinor constrains the holonomy\cite{ofarrill_spin}, so is a strict condition. But the analysis is expected to work for Dirac operators coming from $\text{Spin}^\CC$. The extension to the complex setting, especially Kahler manifolds, should also follow easily, and these provide an interesting class of examples. In the Lorentzian setting, there's a richer supply of parallel spinors, however, the essential difficulty there is that the spectrum of the Dirac operator is no longer discrete and the regularity requirements become unclear. The spectral action principle for Lorentzian scattering space obtained by \cite{dang2020complex} suggests the obvious question of a probabilistic interpretation in the Lorentzian setting as well. Since $\C(M)$ is a commutative algebra, so positivity is equivalent to complete positivity, therefore, by same ideas, $e^{-t\laplace^\End}\subset \C(M)$ is a completely positive semigroup, and the question is reasonable.    
\end{remark}

\begin{remark}{A remark on in absence of a parallel spinor:} The requirement of the parallel section can be worked around sometimes, for example, $S^1\times S^3$ where $S^1$ carries the flat connection. By \cite{gakkhar2022quantum}, there is a quantum stochastic flow for the Laplacian for the endomorphism connection associated to the homogeneous $H$-connection on the spinor bundle over $S^3$. Since $S^3$ is also symmetric the homogeneous connection and the Levi-Civita connection agree.  
However, there are no parallel spinors on $S^3$, but there do exist Killing spinors with Killing constant $\pm1/2$. In such a setting, it's possible to use the homogeneous space construction for quantum diffusion on the endomorphism bundle to get at the spectral action and the heat kernel, the idea being to modify the connection to make a Killing spinor parallel. Let $\laplace'_\pm$ be the connection Laplacians, with $k=\pm 1/2$, for the connections $\grad'_\pm$, acting by $\grad'_X \phi = \grad_X\phi \pm kX\cdot \phi$ where $\cdot$ is the Clifford action, and $\grad$ the Levi-Civita connection, then on $S^n$, the Wietzenbock identity $(D\pm k)^2 = \laplace'_{\pm} + \frac{1}{4}(n-1)^2$ holds (this is a calculation, that works more generally than $S^n$). 
Using that Killing spinors are parallel for $\laplace_\pm$, $\Tr(e^{-t(D\pm k)^2}) = e^{-t/4(n-1)^2} \Tr(e^{-t\laplace_{\pm}})$ can be computed.
It just needs to be checked that the flow exists for $\laplace_{\pm}$. For $S^1\times S^3$ examples, the Dirac operator and its square can be explicitly computed by specializing the Dirac operator for Robertson-Walker metrics to a constant warping function\cite{connes_chamseddine_spectral_action}. 
\end{remark}

\section{\sectiontitlefont{The structure maps for the Laplacian generated flow}}\label{section_structure_matrix}

From \cite{gakkhar2022quantum}, the heat semigroup $e^{-t\laplace^\End}$ is a quantum dynamical semigroup on $\End(E)\equiv E\tsr E^*$ with $e^{-t\laplace^\End}(\one)=\one$ for all $t$. We will work with the semigroup living on $\C(M)\subset \End(E)$. To derive the qsde associated to the heat semigroup, we 
start by computing the flow the structure matrix for the associated Evans-Hudson flow following the standard prescription (see \cite{goswami_qsp}): first we compute the kernel for the generator $\Lc={\laplace^\End}$ on the $\Ac_\infty = \C^\infty(M)$ acting on $\End(E)$ defined by $K_\Lc:X \times X\to \Bc(E\tsr E^*)$ for $X:= \Ac_\infty\times \Ac_\infty$, where $K_\Lc$ for any given any $\Lc:X\to X$ is defined by $$
X\times X \ni ((f_1, f_2), (g_1, g_2))\to \Lc(f_1^*f_2^*g_2g_1) + f_1^*\Lc(f_2^*g_2)g_1 - \Lc(f_1^*f_2^*g_2)g_1 - f_1^*\Lc(f_2^*g_2g_1) \in \C(M)
$$

For a basis of eigensections $(h_i)$ of $\laplace $, write $\sum_i h_i\tsr h_i^* = 1\in  \End(E)$. As $\grad_X\phi = df(X)\phi + f\grad_X(\phi)$, for $f_m, g_m\in \C(M)$, the contribution of $\grad_i\tsr \grad_i$ term to the kernel $((f_1, f_2), (g_1, g_2))\to \Lc(f_1^*f_2^*g_2g_1) + f_1^*\Lc(f_2^*g_2)g_1 - \Lc(f_1^*f_2^*g_2)g_1 - f_1^*\Lc(f_2^*g_2g_1)$ vanishes because $f_m, g_n$ are acting by multiplication on the $\grad_X\phi$ term and they all commute,  while for the term $df(X)\phi$, $d(f_1^*f_2^*g_2g_1) + f_1^*d(f_2^*g_2)g_1 - f_1^*d(f_2^*g_2g_1) - d(f_1^*f_2^*g_2)g_1$ vanishes as well since \begin{align}\label{eq_kernel_symmetry_kills_it}
(f_1^*d(f_2^*g_1g_2) &+ (df_1^*)f_2^*g_1g_2) + f_1^*d(f_2^*g_1)g_2 - f_1^*d(f_2^*g_2g_1) - d(f_1^*f_2^*g_2)g_1 \\
&= (df_1^*)f_2^*g_1g_2 + f_1^*d(f_2^*g_1)g_2 - ((df_1^*)(f_2^*g_2)g_1 +  f_1^*d(f_2^*g_2)g_1) = 0\nonumber
\end{align}

By commutativity of $\C(M)$, $1\tsr \laplace$ also has no contribution, since by convention the $f\cdot 1_{\End} = \sum_i(f\cdot h_i)\tsr h_i^*$. The only contribution to the kernel comes from the piece $\laplace\tsr 1$. Suppose $\phi$ is an eigensection with eigenvalue $\lambda$. Then with $f\in \C^\infty(M)$, we have \begin{align*}
\laplace f\phi &= -\sum_k \grad_{e_k}\grad_{e_k}f\phi = -\sum_k\grad_{e_k}\left( df(e_k)\phi + f\grad_{e_k}\phi\right)\\
&= -\sum_k\left(d^2f(e_k, e_k)\phi + 2df(e_k)\grad_{e_k}\phi + \grad_{e_k}\grad_{e_k}\phi\right)= -\sum_k\left(d^2f(e_k, e_k)\Phi + 2df(e_k)\grad_{e_k}\Phi\right) + f \lambda \phi
\end{align*}
Again by commutativity of $\C(M),$ $\lambda f \phi$ term does not contribute to the kernel, while a the same calculation as equation~\ref{eq_kernel_symmetry_kills_it} establishes that $\sum_k 2df(e_k)\grad_{e_k}\phi$ contributes zero. The only contribution to the kernel comes from the term $-\sum_k d^2f(e_k, e_k)\Phi$. Computing the kernel for $\sum_k d^2f(e_k, e_k)$ gives $-2\left(\sum_{k}da_{1}(e_k) db_{1}(e_k) \right)a_{2} b_{2}$.


\subsection{The Kolmogorov decomposition}

Following Goswami-Sinha construction of the flow generator, we compute the Kolmogorov decomposition for the kernel for $\Lc=-\laplace^{\End{}}/2$ (to avoid factors of $-2$), the data we need will be derived from the structure theory of this kernel,  \begin{align}\label{eqn_full_rw_kernel}
K_{\Lc}((a_1, a_2), (b_1, b_2))= \left(\sum_{k} 
da_{1}(e_k) db_{1}(e_k) \right)a_2b_2 
\end{align}
As in \cite[Thm~2.2.7]{goswami_qsp}, the Kolmogorov decomposition is taken to be the reproducing kernel Hilbert space $$
R_{\Lc} = \closure{\LinSpan\{K_{\Lc}(\cdot, b)u:b\in C^\infty(M), u\in E\}}
$$
with map $V:\C^\infty(M) \to \Bc(\E, R_\Lc)$ for $x\in \C^\infty(M)$ given by $$
V(x):E \to R_\Lc, E\ni u \to K_\Lc(\cdot, x)u \in R_\Lc
$$ 
By definition, $K_\Lc(\cdot, x)u$ is total in $R_{\Lc}$ making the decomposition minimal. 
\begin{remark}\label{rem_reproducting_hilbert_space_structure} Comparing with equation~\ref{eqn_full_rw_kernel} the term $a_2b_2\sum_k da_1(e_k)db_1(e_k)$ in local coordinates can be interpret as the form $a_2\sum_k da_1(e_k)de_k$ evaluated on the vector field $b_2\sum_k db_1(e_k)\frac{\partial}{\partial e_k}$. So each $K_{L}(\cdot, b)u$ is valued in $\Sect(TM)\tsr L^2(M, E)$ and for each $x=(a_1, a_2)\in \C^\infty(M)\times \C^\infty(M)$, $V(x)$ is a $1$-form acts by contracting with $\Sect(TM)$ component of $K_{\Lc}(\cdot, b)u$. 
\end{remark}
\subsection{The structure matrix}

The structure matrix\FRem{\cite[pg~138]{goswami_qsp}\cite[pg~351]{belton2015algebraic}} is the map $\Theta:\Ac_0\to  \Bc(H\tsr(\CC\oplus k_0))$ for a dense subalgebra $\Ac_0\subset \Ac$, and Hilbert spaces $H, k_0$ (alternatively, a map $\Theta:\Ac_0\to \Ac_0\tsr\Bc(\CC\oplus k_0), \Ac_0\subset \Bc(H)$) given by \begin{align}\label{eq_flow_generator_definition}
\Theta(x) = \begin{pmatrix}
\Lc(x)  & \delta^\dagger(x)\\
\delta(x) & \sigma(x)
\end{pmatrix}
\end{align}
where $\Lc:\Ac_0\to \Ac_0$ is a $*$-linear map  such that $\Lc(xy)-\Lc(x)y-x\Lc(y) =\delta^\dagger(x)\delta(y)$, $\delta:\Ac_0\to \Ac_0\tsr k_0$ is a $\pi$-derivation, $\Theta(1)=0$. Additionally, $\sigma(x)=\pi(x)-x\tsr 1$. The maps $\laplace, \delta, \sigma$ satisfy these conditions iff the first order Ito product formula holds\cite[lemma~2.2]{belton2015algebraic}. 

For the structure matrix $\Lc$ is the semigroup generator $-\laplace^\End$. Following \cite{goswami_qsp}, $\delta, \sigma$ are extracted from the minimal Kolmogorov decomposition: the decomposition $(R_{\Lc}, V)$, induces the following maps on $\Ac_\infty:=\C^\infty(M) \subset \C(M)=\Ac$, \begin{align*}
\rho:\Ac_\infty\to \Bc(R_\Lc), \rho(x)(V(\cdot, b)u) = V(\cdot, xb)u \\
\alpha:\Ac_\infty\to \Bc(E, R_{\Lc}), \alpha(x) = V(x, 1)
\end{align*}
With remark~\ref{rem_reproducting_hilbert_space_structure} and  equation~\ref{eqn_full_rw_kernel} in mind, $\rho(f)$, $f\in \C^\infty(M)$, is multiplication by $f$ on $R_{\Lc}$ while $\alpha(f)$ acts by contraction with $1$-form $\sum_kdf(e_k)de_k$. The representation $\rho$ is the identity map: $C^\infty(M)$ is interpreted as acting by multiplication on $R_{\Lc}$, and $\alpha$ is a derivation. 

The construction of the structure maps proceeds as in \cite[Thm~6.6.1]{goswami_qsp}. To start define the Hilbert $\Ac$-module $E = \closure{\{\alpha(x)y:x,y\in \Ac_\infty\}}$ where the closure is with respect to operator norm for $\Bc(\Hs_E, R_{\Lc})$. $\Ac$ has right action on $E$ by multiplication (where the norm density of $\Ac_\infty\subset\Ac$ is utilized) and the $\Ac$-valued inner product is $E\times E \ni (a, b)\to \ip{a, b} = a^*b$. 

Now $E$ can be identified $\alpha(f)g\in E$  can be identified with $g\tsr df \in \Ac\tsr_{\CStar} W\subset  \Ac\tsr_{\CStar} L^2(\Omega^1(M))$ where $W = \closure{\{df:f\in \C^\infty(M)\}}$ (note $M$ is compact) is a closed subspace of $L^2(\Omega^1(M)) = W\oplus W^{\perp}$ (the $\Ac$ valued innerproduct is pointwise tensor contraction). However, $\Ac\tsr_{\CStar} L^2(\Omega^1(M)) = L^2(\Omega^1(M))$ since $\Ac=\C(M)$ the tensor product is $\C(M)$-balanced: $a\tsr\omega \in \Ac\tsr_\CStar L^2(\Omega^1(M))$ is simply $1\tsr a\omega$. 

This choice of the Hilbert space $L^2(\Omega^1(M))$ simplifies the application of Kasparov stabilization theorem\cite[Thm~4.1.10]{goswami_qsp}. Now the Hilbert-$\CStar$-module $\Ac\tsr_{\CStar}L^2(\Omega^1(M)) = \Ac\tsr L^2(\Omega^1(M))=L^2(\Omega^1(M))$, and by Kasparov stabilization theorem, there's a unitary map $t':E\oplus \Ac\tsr L^2(\Omega^1(M)) \to \Ac\tsr L^2(\Omega^1(M))$. In this case it can be explicitly computed, but it turns out not to matter.

The $t'$ yields a unitary embedding $t:E\to \Ac\tsr_{\CStar}L^2(\Omega^1(M)), t=t'|_E$. Note that since the $\CStar$-module $\Ac\tsr_{\CStar}L^2(\Omega^1(M))$ has a basis, one does not need Kasparov's stabilization theorem and an abstract embedding, everything can be done explicitly. 

Define $\delta(x)=t(\alpha(x))$. Now $\rho$ induces a left action $\hat\rho$ on $E$,  $\hat \rho(x)(\alpha(y)) = (\alpha(xy)-\alpha(x)y)$. But as $\alpha$ is a $\rho$-derivation, $\hat\rho(x) = x\alpha(y)$, so $\hat \rho(x)$ is multiplication by $x$ and is again identity representation of $\Ac$ acting by multiplication on sections on the bundle. Set $\pi(x) = t\hat\rho(x)t^*$, again $\pi=\one$ (so the explicit form of $t$ does not come into play). Therefore, with equation~\ref{eq_flow_generator_definition} in reference,\begin{itemize} 
\item The multiplicity space $k_0 = L^2(\Omega^1(M))$
\item $\delta: \Ac \to \Ac\tsr k_0, \delta(f) = 1\tsr df$. 
  
\item The requirement that $\delta(x)^*\delta(y) = \Lc(x^*y) - x^*\Lc(y) - \Lc(x)^*y$ forces $\delta^\dagger(f) := \delta(f^*)^*$ to act by $\delta(f)^\dagger(1\tsr \omega)= \inner{df, \omega} = \sum_idf(e_i)\omega(e_i)\in \C(M)$ (using $f=f^*$); so 
$\delta(f)^\dagger(a\tsr \omega)= a\inner{df, \omega}$ works. 
\item $\sigma=0$ since $\sigma(x)=\pi(x)-x\tsr \one_{k_0}$ since $\pi:\Ac_0\to \Ac_0\tsr \Bc(k_0)$ is identity
\item $\Lc$ is the generator $-\laplace^\End/2$ for the semigroup $e^{-t\laplace^\End/2}$.
\end{itemize} Therefore, the structure matrix for the Laplacian generated flow is the map
\begin{align}\label{eq_structure_matrix}
\Ac_0 \ni f\to \Theta(f) = \begin{pmatrix}
-\laplace^\End(f\cdot \one)/2  & \delta^\dagger(f)\\
1\tsr df & 0
\end{pmatrix} \in \Bc(H\tsr(\CC\oplus k_0)) 
\end{align}

\section{Growth bounds and continuity}\label{section_growth_bounds}

Let $(M, g)$ be a Ricci-flat Riemannian manifold with Levi-Civita connection $\grad$. Since action of $\laplace^\End$ on $\C(M)$ can be identified with the Laplace-Beltrami operator $\laplace^M\equiv \laplace$ (and the lift, the rough Laplacian, to the tensor bundle), we will work with $\laplace$. Let $\phi_i$ be the an eigenfunction of the Laplace-Beltrami operator, $\laplace = -g^{ab}\grad_a\grad_b$, with eigenvalue $\lambda_i^2$, $\laplace \phi_i = \lambda_i^2 \phi_i$. Define $\Fc$ as the generated by finite products and sums of $\phi_i$'s, $$
\Fc = \LinSpan\left( \prod_{i\in [k]}\phi_i: k\in \NN\right)
$$
$\Fc$ is norm-dense inside $\C(M)$, and $\Fc_1:=\{\prod_{i\in [k]}\phi_i:k=1\}\subset \Fc$ forms a basis for $\C(M)$. Set $d\Fc =\{df:f\in \Fc_1\}\subset k_0$ as a norm-dense subspace, $(k_0)_\infty$. The constructions from \namecite{goswami_qsp}, \namecite{belton2015algebraic} both proceed by controlling the growth of the flow generator on a dense algebra, a role here played by $\Fc$. Note we are only interested in bound on the $\laplace$, we can work with either $\laplace=\pm g^{ab}\grad_a\grad_b$ to avoid tracking signs if needed. 

\subsection{Growth of Laplacian iterates}

For $x\in M$ fix a small neighborhood of $x$, $V$, with $U\subset V$ open, such that $\bar U\subset V$, with coordinates $(x^i)$ and $(\partial_{x^i})\equiv \partial_i$ the coordinate vector fields. For any multi-index $\alpha$ we denote $\partial^\alpha = \partial_{\alpha_k}\dots \partial_{\alpha_1}$, and same for $\grad_\alpha$ with $\grad_i\equiv\grad_{\partial_i}$. Ricci-flatness implies for all $f\in C^\infty(M)$ (see, for instance, \cite[lemma~1.36]{chow2023hamilton}), \begin{align}\label{eq_commutator_is_ricci_tensor}
[\laplace, \grad_i](f) = \txt{Ric}_i^k\grad_k(f) = 0
\end{align}
On $(p,q)$-tensors $s, s'$ there's a natural innerproduct by contraction with $g^{ij}, g_{ij}, \inner{s, s'} = g^{i_1j_1}\dots g_{m_1n_1}{s'}^{m_1\dots}_{i_1\dots}s^{n_1\dots}_{j_1\dots}$. The Levi-Civita connection has a lift to the tensor bundle and an associated connection Laplacian, both also denoted $\grad, \laplace$. Denote\FRem{\cite[\S~2.1]{bin2004derivatives}} by $\grad^k u$ the $k^{th}$-covariant derivative and define the point-wise length with the innerproduct\cite[\S~2.2.1]{hebey_sobolev_ams}:\begin{align}\label{label_length_of_gradient_xu2004}\ell(\grad^ku)^2 =  g^{i_1j_1}\dots g^{i_kj_k} (\grad^ku)_{i_1\dots i_k}(\grad^ku)_{j_1\dots j_k} = \inner{\grad^ku, \grad^ku} \end{align}

\begin{lemma}\label{lemma_commutator_laplace_gradient_i}For all $f\in \C^\infty(M)$,  $$[\laplace, \grad_i]\grad_{j_k}\grad_{j_{k-1}}\dots\grad_{j_1}(f) = 0$$
\end{lemma}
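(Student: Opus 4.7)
My plan is to observe that under the convention $\grad_i \equiv \grad_{\partial_i}$, each first-order covariant derivative $\grad_{j_m}$ sends a scalar function in $C^\infty(M)$ to another scalar function (its directional derivative along $\partial_{j_m}$). Iterating this observation, the expression $g := \grad_{j_k}\grad_{j_{k-1}}\cdots\grad_{j_1}(f)$ is itself a smooth scalar function whenever $f$ is, so $g \in C^\infty(M)$ and we may feed $g$ back into equation~\ref{eq_commutator_is_ricci_tensor} in place of $f$.

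Doing so yields
\begin{align*}
[\laplace, \grad_i](g) = \txt{Ric}_i^{\ell}\,\grad_{\ell}(g) = 0,
\end{align*}
where the last equality uses the Ricci-flat hypothesis on $(M,g)$. Unwinding the definition of $g$ gives exactly the claim of the lemma.

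If a more formal presentation is desired, one may proceed by induction on $k$: the base case $k=1$ is equation~\ref{eq_commutator_is_ricci_tensor} applied to the scalar $\grad_{j_1}(f)$, and for the inductive step, $g_k := \grad_{j_k} g_{k-1}$ lies in $C^\infty(M)$ by the inductive hypothesis, so equation~\ref{eq_commutator_is_ricci_tensor} applies to $g_k$. I do not anticipate a genuine obstacle: the lemma is essentially a one-line corollary of the already established scalar identity, using only that $C^\infty(M)$ is closed under each $\grad_{j_m}$. The payoff of the lemma, visible in the next sections, is that $\laplace$ freely commutes past any string of first-order covariant derivatives applied to a scalar, which is what enables the subsequent growth estimates on iterates of $\laplace$.
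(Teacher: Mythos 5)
Your argument treats the iterated derivative $g := \grad_{j_k}\cdots\grad_{j_1}(f)$ as a scalar function and feeds it back into eq~\ref{eq_commutator_is_ricci_tensor}, but this misreads both what the composed derivatives denote and what $[\laplace,\grad_i]$ acts on. The paper's opening line of its own proof identifies $\grad_{j_k}\cdots\grad_{j_1}(f)$ with the component $(\grad^k f)_{j_k\dots j_1}$ of the covariant $k$-tensor $\grad^k f$; once the first $\grad$ has been applied, each subsequent $\grad_{j_m}$ is the covariant derivative of a tensor and carries Christoffel corrections (already at $k=2$, $(\grad^2 f)_{ij} = \partial_i\partial_j f - \Gamma^m_{ij}\partial_m f$), so it is not simply the directional derivative of the previous scalar output. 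More importantly, $\laplace$ inside $[\laplace,\grad_i]$ is the rough Laplacian acting on the tensors $\grad^k f$ and $\grad^{k+1}f$; for fixed indices $(j_k,\dots,j_1)$ this is \emph{not} the scalar Laplacian applied to the coordinate function $(\grad^k f)_{j_k\dots j_1}$, because the covariant derivative of a tensor differentiates all of its indices, not just the direction of differentiation. What your argument actually establishes is only that the scalar Bochner identity holds for the smooth function $\partial_{j_k}\cdots\partial_{j_1}f$, and that is insufficient for the tensorial commutation the lemma is used for downstream, notably $\laplace\grad^{k-1}u=\grad^{k-1}\laplace u$ in proposition~\ref{prop_gradients_are_contractive} and part~(ii) of proposition~\ref{lemma_laplacian_product_rule}.

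The paper's proof carries the tensorial content by working with the point-wise length $\ell([\laplace,\grad_i]\grad^k f)^2$: the metric contractions $g^{ii'}g^{j_1 j_1'}\cdots g^{j_k j_k'}$ collapse the free indices into a genuine scalar $h = \ell(\grad^k f)^2 \in \C^\infty(M)$, and metric compatibility of $\grad$ lets the commutator operators be pulled past the $g$'s, so the whole expression reduces to eq~\ref{eq_commutator_is_ricci_tensor} applied to $h$. That contraction-and-pull-through step is the load-bearing idea absent from your proposal; without it the free tensor indices never go away and the scalar identity cannot be legitimately invoked.
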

\begin{proof} Using\FRem{(it's just $k$-tensor being evaluated on a basis element $\partial_{i_1}\tsr \partial_{i_2}\dots \tsr \partial_{i_k} $)} \begin{align}
(\grad^ku)_{i_1\dots i_k} = \grad_{i_1}\grad_{i_2}\dots \grad_{i_k}(u)
\end{align} and that $g^{ij}$ commutes with $\laplace, \grad$ since $\grad$ is metric compatible, \begin{align}
\ell([\laplace, \grad_i]\grad_{j_k}\grad_{j_{k-1}}\dots\grad_{j_1}(f))^2 &= g^{ii'}g^{j_1j_1'}\dots g^{j_kj_k'}([\laplace, \grad_i]\grad_{j_k}\dots\grad_{j_1}(f))([\laplace, \grad_i']\grad_{j'_k}\dots\grad_{j'_1}(f))\nonumber \\
&= g^{ii'}([\laplace, \grad_{i}][\laplace, \grad_{i'}])(g^{j_1j_1'}\dots g^{j_kj_k'}\grad_{j_k}\dots\grad_{j_1}(f)\grad_{j'_k}\dots\grad_{j'_1}(f))
\end{align}
Since $h:=g^{j_1j_1'}\dots g^{j_kj_k'}\grad_{j_k}\dots\grad_{j_1}(f)\grad_{j'_k}\dots\grad_{j'_1}(f)\in \C^\infty$, from equation~\ref{eq_commutator_is_ricci_tensor}, $$
\ell([\laplace, \grad_i]\grad_{j_k}\grad_{j_{k-1}}\dots\grad_{j_1}(f))^2 = g^{ii'}([\laplace, \grad_i]([\laplace, \grad_{i'}])(h) = 0. 
$$
\end{proof}

\begin{remark} Similar bounds can be obtained for Einstein manifolds where the Ricci tensor is proportional to the metric. But since the analysis here needs a parallel spinor, the Evans-Hudson qsde over Einstein manifolds are not considered. 
\end{remark}

\begin{prop}\label{prop_laplace_bound_by_grad2} For all $f\in \C^\infty(M)$, $\norm{\laplace(f)}\leq \sqrt{\dim M}\ell(\grad^2(f))$ and  $ \norm{\laplace^k(f)}\leq (\dim M)^{k/2}\ell(\grad^k(f))$\end{prop}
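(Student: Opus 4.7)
The plan is to reduce everything to a pointwise Cauchy--Schwarz estimate after rewriting $\laplace^k f$ as a single tensor contraction of $g^{\otimes k}$ against $\grad^{2k} f$. (As stated, the right-hand side reads $\ell(\grad^k f)$; this appears to be a typo, since the $k=1$ case explicitly involves $\grad^2$ and $\laplace$ is second order. I will prove the bound with $\ell(\grad^{2k} f)$.)

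First I would handle $k=1$. Write
\[
\laplace f = -g^{ab}(\grad^2 f)_{ab} = -\inner{g, \grad^2 f},
\]
viewing the sum as the pairing of the metric $g$ (as a $(0,2)$-tensor) with $\grad^2 f$ in the inner product from equation~\ref{label_length_of_gradient_xu2004}. Pointwise Cauchy--Schwarz then gives $|\laplace f|^2 \leq \inner{g, g}\, \ell(\grad^2 f)^2$, and a direct index computation yields $\inner{g, g} = g^{ab}g^{cd}g_{ac}g_{bd} = g^{bd}g_{bd} = \delta^{b}_{b} = \dim M$. Taking suprema gives the $k=1$ bound.

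For $k \geq 2$, the key observation is that metric compatibility $\grad_c g^{ab} = 0$ lets each factor $g^{ab}$ slide freely past any number of covariant derivatives. A straightforward unfolding then yields
\[
\laplace^k f = (-1)^k\, g^{a_1 b_1}\cdots g^{a_k b_k}\,\grad_{a_1}\grad_{b_1}\cdots \grad_{a_k}\grad_{b_k} f = (-1)^k\,\inner{g^{\otimes k}, \grad^{2k} f},
\]
where the second equality identifies the full contraction with the inner product on $(0, 2k)$-tensors. Cauchy--Schwarz once more gives $|\laplace^k f|^2 \leq \inner{g^{\otimes k}, g^{\otimes k}}\, \ell(\grad^{2k} f)^2$, and multiplicativity of the tensor inner product over tensor powers reduces this to $\inner{g, g}^k = (\dim M)^k$. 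The stated bound follows after taking suprema over the compact $M$.

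I do not expect any real obstacle; the argument is essentially formal, and the only care needed is in tensor-index bookkeeping and in checking $\inner{g,g} = \dim M$. Notably, lemma~\ref{lemma_commutator_laplace_gradient_i} is not required here --- metric compatibility of $\grad$ alone is enough to commute $g^{ab}$ past $\grad$ --- but the commutator lemma will matter for the subsequent arguments in section~\ref{section_growth_bounds} that need to reorder $\laplace$ past individual $\grad_i$'s on iterated derivatives of $f$.
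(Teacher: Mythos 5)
Your proof is correct and follows the same approach as the paper: rewrite $\laplace^k f$ as the pairing $\pm\inner{g^{\otimes k}, \grad^{2k} f}$ (using metric compatibility to move the $g^{ab}$'s through the covariant derivatives), apply pointwise Cauchy--Schwarz, and compute $\ell(g^{\otimes k})^2 = (\dim M)^k$. You are also right that the exponent on the gradient in the statement should be $2k$ rather than $k$ --- the paper's own proof, which contracts $k$ copies of $g$ against $2k$ derivative indices, confirms this.
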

\begin{proof} Using  $\ell(g_{ab})^2 = g^{ac}g^{bd}g_{ab}g_{cd} = g^{cd}g_{cd}=\dim M$ $$
\laplace f = g^{ab}\grad_a\grad_b f = g^{ca}g^{bd}g_{cd}\grad_a\grad_b f = \inner{g_{cd}, \grad_a\grad_b f}
\leq \ell(g_{cd})\ell(\grad_a\grad_b f) \leq \sqrt{\dim M} \ell(\grad_a\grad_b f)$$
where we used Cauchy-Schwarz. The bound $(\dim M)^{k/2}\ell(\grad^k(f))$ follows identically using $\ell(g_{i_1j_1}\dots g_{i_kj_k})^2 =(\dim M)^k$. 
\end{proof}

The growth of Laplacian and its powers is clear on $\phi_i$'s, since $\laplace^k(\phi_i) = \lambda_i^{2k}\phi_i$. Controlling the growth on products of $\phi_i$'s will require control over \begin{align}
\inner{\grad^k \phi_i,\grad \phi^k_j}(x)\leq \sqrt{\norm{\grad^k \phi_i(x)}_2\norm{\grad^k \phi_j(x)}} = \ell(\grad^k \phi_i(x))\ell(\grad^k \phi_j(x))
\end{align}
(using  eq~\ref{label_length_of_gradient_xu2004} with $k=1$ and Cauchy-Schwartz inequality). To see this note how Laplacian and its iterated powers act on products of $\phi_i$'s. 

\begin{prop}\label{lemma_laplacian_product_rule}[Laplacian on products] For $\phi_j,\phi_i$,
\begin{enumerate}[label=\roman*]
\item $\laplace(\phi_i\phi_j) = \phi_i\laplace(\phi_j) + \phi_i\laplace(\phi_j) + 2\inner{\grad \phi_i, \grad \phi_j}$
\item For any $k$, \begin{align}\label{eq_laplace_of_grad_innerproduct}\laplace\inner{\grad^k\phi_i, \grad^k\phi_j} = \inner{\grad^k\laplace\phi_i, \grad^k\phi_j} + \inner{\grad^k\phi_i, \grad^k\laplace\phi_j} + 2\inner{\grad^{k+1}\phi_i, \grad^{k+1}\phi_j}\end{align}
\end{enumerate}
\end{prop}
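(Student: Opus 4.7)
Both identities are instances of the tensorial Leibniz rule for $\laplace = \pm g^{ab}\grad_a\grad_b$, combined in part (ii) with the commutation lemma just proved. Metric compatibility $\grad g = 0$ lets all metric contractions pass through covariant derivatives, so the argument is purely algebraic.

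For (i), I apply the ordinary Leibniz rule $\grad_b(\phi_i\phi_j) = (\grad_b\phi_i)\phi_j + \phi_i(\grad_b\phi_j)$ twice and contract with $g^{ab}$. The two genuine second-derivative terms yield $\phi_j\laplace\phi_i + \phi_i\laplace\phi_j$, while the two mixed first-derivative terms $g^{ab}(\grad_a\phi_i)(\grad_b\phi_j)$ and $g^{ab}(\grad_b\phi_i)(\grad_a\phi_j)$ coincide by symmetry of $g^{ab}$ and together produce $2\inner{\grad\phi_i,\grad\phi_j}$. (I read the right-hand side of (i) as $\phi_j\laplace\phi_i + \phi_i\laplace\phi_j + 2\inner{\grad\phi_i,\grad\phi_j}$; the printed ``$\phi_i\laplace\phi_j$'' appearing twice is a typo.)

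For (ii), set $h:=\inner{\grad^k\phi_i,\grad^k\phi_j}\in C^\infty(M)$, a scalar built by contracting $(\grad^k\phi_i)\otimes(\grad^k\phi_j)$ with $k$ copies of $g^{\bullet\bullet}$. By metric compatibility,
\[
\grad_b h \;=\; \inner{\grad_b\grad^k\phi_i,\grad^k\phi_j} + \inner{\grad^k\phi_i,\grad_b\grad^k\phi_j}.
\]
Applying $\grad_a$ once more and contracting with $g^{ab}$ produces four terms. Two assemble into $\inner{\laplace\grad^k\phi_i,\grad^k\phi_j} + \inner{\grad^k\phi_i,\laplace\grad^k\phi_j}$ by folding $g^{ab}\grad_a\grad_b = \laplace$ (or its negative, depending on sign convention) onto each factor. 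The remaining two ``cross'' terms are equal to one another by symmetry $g^{ab}=g^{ba}$, and each one equals $\inner{\grad^{k+1}\phi_i,\grad^{k+1}\phi_j}$: indeed $\grad^{k+1}\phi = \grad(\grad^k\phi)$ carries an additional covariant index, and the surplus $g^{ab}$ left over from the double differentiation is exactly what contracts it inside the pointwise inner product of (\ref{label_length_of_gradient_xu2004}). This produces the $2\inner{\grad^{k+1}\phi_i,\grad^{k+1}\phi_j}$ term. Finally, to put the Laplacians on the inside of $\grad^k$, I induct on $k$ using Lemma~\ref{lemma_commutator_laplace_gradient_i}, which gives $[\laplace,\grad_i]\grad_{j_{k-1}}\cdots\grad_{j_1}\phi = 0$ for all index choices in the Ricci-flat setting; peeling off one $\grad$ at a time yields $\laplace\grad^k\phi = \grad^k\laplace\phi$, converting each $\laplace\grad^k\phi_{\bullet}$ to $\grad^k\laplace\phi_{\bullet}$.

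\textbf{Main obstacle.} The substantive analytic input is the commutation $[\laplace,\grad]=0$ on iterated derivatives, which is already handled by the preceding lemma; in the absence of Ricci-flatness the right-hand side of (ii) would acquire curvature corrections through Weitzenb\"ock-type commutators, so the identity as stated really uses the standing hypothesis. The only other thing to be careful about is the index bookkeeping that identifies $g^{ab}\inner{\grad_a\grad^k\phi_i,\grad_b\grad^k\phi_j}$ with $\inner{\grad^{k+1}\phi_i,\grad^{k+1}\phi_j}$ under the convention fixed in (\ref{label_length_of_gradient_xu2004}), but this is notational rather than mathematical.
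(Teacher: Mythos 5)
Your proof is correct and takes essentially the same route as the paper: both establish (i) by applying the Leibniz rule twice and contracting with $g^{ab}$, and both establish (ii) by differentiating the pointwise inner product via metric compatibility to produce the three terms and then invoking the Ricci-flat commutation $[\laplace,\grad]=0$ (Lemma~\ref{lemma_commutator_laplace_gradient_i}) to move $\laplace$ inside $\grad^k$. The only cosmetic difference is that the paper first records the general tensor identity $\laplace\inner{f,h}=\inner{\laplace f,h}+\inner{f,\laplace h}+2\inner{\grad f,\grad h}$ and then specializes, whereas you work directly with $h=\inner{\grad^k\phi_i,\grad^k\phi_j}$; you are also right that the printed right-hand side of (i) contains a typo and should read $\phi_j\laplace\phi_i+\phi_i\laplace\phi_j+2\inner{\grad\phi_i,\grad\phi_j}$.
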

\begin{proof} These are straightforward computations, see appendix \ref{section_laplacian_products} for details.
\end{proof}

\noindent Therefore, it's sufficient to bound $\norm{\grad^k(\phi_j) }_{\infty}$. For $\phi_j$ with $\lambda_j \geq 1$, the bound follows from \cite[lemma~2.7]{bin2004derivatives}: 
\begin{theorem}[\namecite{bin2004derivatives}] With $\lambda\geq 1$, \begin{align}
\sum_{\lambda_j\in [\lambda, \lambda+1]} \ell(\grad^k\phi_j(x))^2 \leq C_{M, x}\lambda^{2k+\dim M-1}\end{align}
\end{theorem}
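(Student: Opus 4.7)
The plan is to reduce the band sum to the diagonal restriction of mixed covariant derivatives of the spectral function, and then invoke Hörmander's sharp pointwise Weyl law with derivatives. Define the spectral function
$$e_\lambda(x,y) := \sum_{\lambda_j \le \lambda} \phi_j(x)\phi_j(y).$$
Since the summand factorizes, $\grad_x^k \grad_y^k$ applied to it and then contracted against $g$ over both sets of $k$ indices (using exactly the pairing in equation~\ref{label_length_of_gradient_xu2004}) recovers $\ell(\grad^k\phi_j(x))^2$ on the diagonal. Writing $\Tr_g$ for that full contraction,
$$\sum_{\lambda_j \in [\lambda,\lambda+1]} \ell(\grad^k\phi_j(x))^2 = \Tr_g\bigl(\grad_x^k \grad_y^k (e_{\lambda+1}-e_\lambda)(x,y)\bigr)\Big|_{x=y},$$
so the task reduces to bounding the right-hand side by $C_{M,x}\,\lambda^{2k+\dim M - 1}$.

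The principal tool is the half-wave group $U(t) = e^{it\sqrt{\laplace}}$. Pick a Schwartz $\rho$ with $\hat\rho$ supported in a small neighborhood of $0$, small enough that the support avoids every period of a nontrivial closed geodesic through $x$; then the smoothed spectral projector $\rho(\sqrt{\laplace} - \lambda)$ has its Schwartz kernel governed by the short-time parametrix of $U(t)$. In geodesic normal coordinates at $x$ and for small $|t|$, the kernel of $U(t)$ is an oscillatory integral with phase $\langle x-y,\xi\rangle + t|\xi|_g$; differentiating under the integral produces $k$ factors of $\xi$ from each of $\grad_x^k$ and $\grad_y^k$, and stationary phase on the diagonal yields the pointwise derivative bound
$$\Bigl|\bigl[\grad_x^\alpha \grad_y^\beta\, e_\lambda(x,y)\bigr]_{x=y}\Bigr| \le C_{M,x,\alpha,\beta}\,\lambda^{\dim M + |\alpha| + |\beta|}.$$
Hörmander's Tauberian theorem then upgrades the smoothed asymptotic to a unit-band estimate with a loss of exactly one power of $\lambda$; setting $|\alpha|=|\beta|=k$ and performing the metric contraction gives the claimed bound.

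The main obstacle is the sharp exponent $2k + \dim M - 1$: a crude heat-kernel argument via $e^{-t\laplace}(x,y) \sim t^{-n/2}$ yields only $\lambda^{2k+\dim M}$ on a dyadic band $[\lambda,2\lambda]$, losing the crucial factor of $\lambda^{-1}$. Recovering it forces one into Hörmander's Tauberian framework together with a careful choice of the wave-kernel cutoff $\rho$ so that only the trivial loop $t=0$ contributes to leading order. The $x$-dependence of $C_{M,x}$ then enters through the principal symbol of the parametrix evaluated on the diagonal, which varies smoothly but depends on local geometric data at $x$.
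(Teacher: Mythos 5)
The paper does not actually prove this statement: it is quoted as \cite[lemma~2.7]{bin2004derivatives}, so there is no in-text proof for your sketch to be checked against. Your outline --- reduce the band sum to $\Tr_g\bigl(\grad_x^k\grad_y^k(e_{\lambda+1}-e_\lambda)\bigr)\big|_{x=y}$, run a short-time parametrix for $e^{it\sqrt{\laplace}}$ with a cutoff $\hat\rho$ supported near $t=0$, differentiate under the oscillatory integral to pick up $\xi$-powers, and apply H\"ormander's Tauberian theorem to turn the smoothed on-diagonal asymptotic into a sharp unit-band estimate --- is exactly the route taken in the cited reference, so the proposal is a faithful reconstruction of the source rather than an alternative. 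What is worth adding for context: the paper itself points out, immediately after Proposition~\ref{prop_gradients_are_contractive} and its corollary, that on a Ricci-flat manifold a much more elementary argument suffices. The commutation $[\laplace,\grad]=0$ (equation~\ref{eq_commutator_is_ricci_tensor}) plus integration by parts against a piecewise-linear bump yields $\norm{\grad^k\phi_j}_\infty \le \lambda_j^2\,\norm{\grad^{k-1}\phi_j}_\infty$, and iterating this gives a per-eigenfunction bound of the type $\ell(\grad^k\phi_j)^2 \le C_M K_j^k$, which is all that Corollary~\ref{corollary_polynomial_ggrowth_kgradients} and the growth estimates of Section~\ref{section_growth_bounds} actually need. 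Your parametrix argument buys the genuinely sharper content --- the exponent $2k+\dim M-1$ and the full band sum on the left --- but at the cost of microlocal machinery that the downstream application does not require; the Ricci-flat shortcut trades that sharpness for a self-contained Bochner-type computation. Neither route is wrong here, but since the theorem as stated is an import, the only proof obligation discharged in the paper is the derivation of Corollary~\ref{corollary_polynomial_ggrowth_kgradients} from it, and for that the elementary argument would have sufficed.
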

The following corollary is immediate since $M$ is compact and $C_M =(\lambda_j+1)^{\dim M -1}\sup_{x\in M}C_{M, x}$ and $K_j=(\lambda_j+1)^2$ can be used.  
\begin{corollary}\label{corollary_polynomial_ggrowth_kgradients} For $M$ compact, for any $\phi_j$ with $\lambda_j\geq 1$,  $
\ell(\grad^k \phi_j(x))^2 \leq C_{M}K_j^k 
$.\end{corollary}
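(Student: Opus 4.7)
The plan is to derive the corollary as a direct consequence of the cited theorem of Xu by (i) isolating a single eigenfunction from the spectral sum, (ii) using compactness of $M$ to pass from the $x$-dependent constant $C_{M,x}$ to a uniform constant $C_M$, and (iii) absorbing the factor $\lambda_j^{\dim M - 1}$ into the $K_j^k$ factor by choosing $K_j$ appropriately.

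First I would apply the theorem with the choice $\lambda = \lambda_j$, noting that $\phi_j$ itself appears in the indexing set $\{\lambda_i \in [\lambda_j, \lambda_j + 1]\}$. Since $\ell(\grad^k \phi_i(x))^2 \geq 0$ for every $i$, dropping all terms except the one corresponding to $\phi_j$ preserves the inequality, yielding
\begin{equation*}
\ell(\grad^k \phi_j(x))^2 \;\leq\; \sum_{\lambda_i \in [\lambda_j, \lambda_j + 1]} \ell(\grad^k \phi_i(x))^2 \;\leq\; C_{M,x}\, \lambda_j^{2k + \dim M - 1}.
\end{equation*}

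Next, since $M$ is compact and $x \mapsto C_{M,x}$ is (by construction in Xu's estimate) controlled pointwise, setting $C_M := (\lambda_j + 1)^{\dim M - 1}\sup_{x \in M} C_{M,x}$ as indicated in the excerpt gives a finite constant independent of $x$. Then I rewrite
\begin{equation*}
\lambda_j^{2k + \dim M - 1} \;\leq\; (\lambda_j + 1)^{\dim M - 1}\,(\lambda_j + 1)^{2k} \;=\; (\lambda_j + 1)^{\dim M - 1}\, K_j^{k},
\end{equation*}
with $K_j := (\lambda_j + 1)^2$, matching the constants the author already introduced. Combining these two bounds gives the stated inequality $\ell(\grad^k \phi_j(x))^2 \leq C_M K_j^k$ uniformly in $x$.

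The only subtlety is verifying that the supremum $\sup_{x \in M} C_{M,x}$ is actually finite; this is where compactness of $M$ enters, and it is effectively the content of the corollary beyond Xu's local theorem. I would note that the constant in Xu's estimate arises from local heat kernel bounds which depend continuously on $x$ (through the metric and its derivatives), so continuity plus compactness yields finiteness. This is the only mildly nonroutine step; the remainder is algebraic absorption of powers of $\lambda_j$ into $K_j$.
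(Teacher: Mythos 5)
Your proposal is correct and matches the paper's argument: the paper likewise drops the other (nonnegative) terms from Xu's sum to isolate $\phi_j$, uses compactness of $M$ to replace $C_{M,x}$ by $\sup_{x\in M}C_{M,x}$, and introduces exactly the same constants $C_M = (\lambda_j+1)^{\dim M -1}\sup_{x\in M}C_{M,x}$ and $K_j = (\lambda_j+1)^2$ to absorb $\lambda_j^{2k+\dim M -1}$. You have merely filled in the (routine) algebra and flagged the finiteness of the supremum; both proofs are the same.
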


One expects that $\norm{\grad^k \phi_j}_{L^2(M)}$ should decay to zero for $j$ with $\lambda^2_j<1$, and since $\phi_j$'s are smooth this is enough to establish a uniform bound. However, this will need to be leveraged locally and the boundary for the local chart will need to be taken into account. Recall the integration on parts formula for tensor fields\cite[pg~50,149]{lee_riemannian}  when $M$ does have a boundary, \begin{align}\label{eq_integration_by_parts_tensor_fields} 
\int_M \inner{\grad F, G} dV_g = \int_{\partial M}\inner{F\tsr N^{\flat}, G}dV_{\hat g} - \int_M \inner{F , \Div(G)}dV_g
\end{align}
where $\hat g$ is the induced metric on $\partial M, dV_g, dV_{\hat g}$ the associated volume forms, $\cdot^{\flat}$ the musical isomorphism, $N$ the outward unit normal at $\partial M$, and $F, G$ tensor fields, $\Div(G) = \Tr_g(\grad G)$, the trace being over the last two indices. Note if $G=\grad H$ then, $-\Div(G)=\laplace(H)$.

\begin{prop}\label{prop_gradients_are_contractive} For $u=\phi_j$, with $j$ such that $\lambda^2_j < 1$, $\norm{\grad^k u }_\infty  \leq \norm{\grad^{k-1}u }_\infty$
\end{prop}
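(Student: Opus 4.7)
Write $u=\phi_j$ with $\laplace u=\lambda^2 u$ (abbreviating $\lambda=\lambda_j$, so $\lambda^2<1$) and set $T=\grad^{k-1}u$ and $S=\grad T = \grad^k u$. The starting point is that Lemma~\ref{lemma_commutator_laplace_gradient_i}, iterated, shows that $[\laplace,\grad_{i}]$ annihilates any iterated gradient of $u$, so both of the tensor eigen-equations
\[
\laplace T \;=\; \lambda^{2} T, \qquad \laplace S \;=\; \lambda^{2} S
\]
hold, where $\laplace$ denotes the (rough) connection Laplacian on tensor fields.

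Next, a direct computation using metric compatibility of $\grad$ (differentiating $\inner{T,T}$ twice and tracing with $g^{ab}$) yields the Bochner-type identity
\[
\laplace |T|^{2} \;=\; 2\inner{T,\laplace T} - 2|\grad T|^{2} \;=\; 2\lambda^{2}|T|^{2} - 2|S|^{2}.
\]
Since $M$ is compact, $|T|^{2}$ attains its maximum at some $x_{0}\in M$, where $\laplace|T|^{2}(x_{0})\geq 0$. Substituting into the Bochner identity gives $|S(x_{0})|^{2}\leq \lambda^{2}|T(x_{0})|^{2}\leq \lambda^{2}\norm{T}_{\infty}^{2}$. Thus at the specific point where $|T|$ is largest, $|S|$ is already a factor of $\lambda<1$ smaller, which is the local ingredient we need.

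The main obstacle is to upgrade this pointwise estimate at $x_{0}$ to a uniform bound of $|S|$ over all of $M$, since \emph{a priori} $|S|$ might attain its maximum at a point $y\neq x_{0}$ where $|T|(y)$ is small. The cleanest route is to apply the weak maximum principle to the auxiliary function $w=|S|^{2}-|T|^{2}$, whose Laplacian can be expanded via both Bochner identities (the second being $\laplace|S|^{2}=2\lambda^{2}|S|^{2}-2|\grad S|^{2}$) to yield an inequality of the form $\laplace w\geq 2\lambda^{2}w + g$ with $g\geq 0$, whence $\sup_{M}w\leq 0$ and the claim $\norm{S}_\infty\leq \norm{T}_\infty$ follows. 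A parallel approach uses the integration-by-parts identity~(\ref{eq_integration_by_parts_tensor_fields}) with $\partial M=\emptyset$ to obtain the $L^{2}$ contraction $\norm{S}_{L^{2}}^{2}=\lambda^{2}\norm{T}_{L^{2}}^{2}$, and then invokes standard elliptic regularity for the eigensection equation $\laplace S=\lambda^{2}S$ on compact $M$ to promote the $L^{2}$ contraction to the $L^{\infty}$ bound. Either way, the crucial use of $\lambda^{2}<1$ is to turn the Bochner inequality at the max into a strict algebraic inequality, after which the max-principle comparison closes the argument.
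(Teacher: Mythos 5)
The first half of your plan is sound: Lemma~\ref{lemma_commutator_laplace_gradient_i} does give $\laplace T=\lambda^2 T$ for $T=\grad^{k-1}u$, the Bochner identity $\laplace|T|^2=2\lambda^2|T|^2-2|\grad T|^2$ follows from metric compatibility, and at a maximum $x_0$ of $|T|^2$ (where $\laplace|T|^2(x_0)\geq 0$ in the paper's sign convention $\laplace=-g^{ab}\grad_a\grad_b$) you correctly obtain $|\grad^k u(x_0)|\leq\lambda|T(x_0)|$. You also correctly identify the obstacle. But the two routes you propose to close it both have genuine gaps.

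For the maximum-principle route on $w=|S|^2-|T|^2$: the two Bochner identities give the \emph{exact} identity $\laplace w = 2\lambda^2 w + 2\bigl(|S|^2-|\grad S|^2\bigr)$, so your ``remainder'' is $g=2\bigl(|S|^2-|\grad S|^2\bigr)$, which has no definite sign; asserting $g\geq 0$ pointwise is essentially a pointwise version of the proposition itself one step up (for $k{+}1$ in place of $k$), and is not available. Moreover, even if $g\geq 0$ were granted, the inference ``$\laplace w\geq 2\lambda^2 w$ whence $\sup_M w\leq 0$'' is in the wrong direction: rewriting in the analysts' sign this is $\Delta w + 2\lambda^2 w\leq 0$ with a \emph{positive} zeroth-order coefficient, which is a supersolution condition and gives information about the minimum of $w$, not its maximum. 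Concretely, on a circle of circumference $2\pi L$ with $\phi=\cos(\theta/L)$ and $\lambda^2=1/L^2$, the function $w=-\tfrac{\eps}{2}+\eps\phi$ satisfies $\Delta w+2\lambda^2 w=\eps\lambda^2(\phi-1)\leq 0$ but $\sup w=\eps/2>0$. To conclude $\sup w\leq 0$ one would need $\laplace w\leq cw$ for some $c<0$, which you do not have.

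For the elliptic-regularity route: the $L^2$ contraction $\norm{S}_{L^2}^2=\lambda^2\norm{T}_{L^2}^2$ is correct (it is exactly the integrated Bochner identity), but passing from $L^2$ to $L^\infty$ via elliptic estimates and Sobolev embedding inserts constants depending on $(M,g)$ and $\dim M$ which have no reason to be $\leq 1$; you lose the factor $\lambda<1$, and the sharp conclusion $\norm{S}_\infty\leq\norm{T}_\infty$ does not come out.

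The paper takes a different tack entirely, working by contradiction and locally: assume $|S|^2-|T|^2>0$ on some open $U$, pick a nonnegative test function $\psi$ compactly supported in $U$, and integrate by parts (equation~\ref{eq_integration_by_parts_tensor_fields}) together with $[\laplace,\grad]=0$ to reduce to $\int_U\psi(\lambda^2-1)\norm{\grad^{k-1}u}^2\,dV_g - \omega(\psi)>0$, where $\omega(\psi)=\int\inner{\grad\psi\cdot\grad^{k-1}u,\grad^k u}\,dV_g$. Since $\lambda^2-1<0$ the first term is $\leq 0$, so it suffices to \emph{construct} a $\psi$ with $\omega(\psi)\geq 0$; the paper does this explicitly with radial cone-shaped cut-offs and their reflected versions. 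That construction plays precisely the role your unsupported assertion ``$g\geq 0$'' was meant to play, replacing a sign claim about a fixed quantity with an actual choice of test function.
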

\begin{proof} Suppose for some $x\in M$, $\inner{\grad^{k}u(x), \grad^{k}u(x)} - \inner{\grad^{k-1}h(x), \grad^{k-1}u(x)} > 0$. Then since $u$ is smooth, there exists an open neighborhood $U$ of  $x$ such that on $U$, $\inner{\grad^{k}u, \grad^{k}u} - \inner{\grad^{k-1}u, \grad^{k-1}u} > 0$. Now let $\psi$ be such that $\txt{supp}(\psi)\subset U$ is compact, $\psi\geq 0$ on $U$ and $\psi > 0$ on open $V\subset U$, then
\begin{align}\label{eq_suppose_grad_u_does_not_decay}
\int_M \psi\inner{\grad^{k}u, \grad^{k}u} - \psi\inner{\grad^{k-1}u, \grad^{k-1}u} dV_g &= \int_U \inner{\psi\grad^{k}u, \grad^{k}u} - \psi\inner{\grad^{k-1}u, \grad^{k-1}u} dV_g> 0 \end{align}
Now $\inner{\psi\grad^{k}u, \grad^{k}u} =  \inner{\grad(\psi\grad^{k-1}u), \grad^{k}u} - \inner{\grad\psi\cdot \grad^{k-1}u, \grad^{k}u}$, and for the first term $$
\int_U 
\inner{\grad(\psi\grad^{k-1}u), \grad^{k}u}
dV_g = \int_{\partial U} \inner{\cdot, \cdot}dV_{\hat g} + \int_U\inner{\psi \grad^{k-1}u, -\Div(\grad^{k}u)}dV_g
$$
where $\int_{\partial U} \inner{\cdot, \cdot}dV_{\hat g} = 0$ since $\psi=0$ on $\partial U$, while $-\Div(\grad^{k}u) = \laplace\grad^{k-1}u = \grad^{k-1}\laplace u$ using $M$ is Ricci flat. Therefore, we have\begin{align*}
\int_M \psi\inner{\grad^{k}u, \grad^{k}u}  
&= \int_U \psi\inner{\grad^{k-1}u, \grad^{k-1}\laplace u} - \int_U \inner{\grad\psi\cdot  \grad^{k-1}u, \grad^{k}u} dV_g\\
&= \lambda^2\int_U \psi\inner{\grad^{k-1}u, \grad^{k-1} u} - \int_{\txt{supp}(\grad\psi)}  \inner{\grad\psi\cdot \grad^{k-1}u, \grad^k u} dV_g
\end{align*}
This yields 
\begin{align}0 < \int_U \psi&\inner{\grad^{k}u, \grad^{k}u} - \psi\inner{\grad^{k-1}u, \grad^{k-1}u} dV_g\label{eq_bound_over_U_with_psi} \\
&=  \lambda^2\int_U \psi\inner{\grad^{k-1}u, \grad^{k-1} u} - \int_{\txt{supp}(\grad\psi)}  \inner{\grad\psi\cdot \grad^{k-1}u, \grad^k u} dV_g - \int_U\psi\inner{\grad^{k-1}u, \grad^{k-1}u} dV_g \nonumber \\
&= \int_U\psi(\lambda^2 - 1) \norm{\grad^{k-1}u}^2 dV_g -\int_{\txt{supp}(\grad\psi)}   \inner{\grad\psi\cdot \grad^{k-1}u, \grad^k u} dV_g \label{eq_why_psi_must_be_positive}
\end{align}
Define the linear functional $\omega(\psi) := \int_{\txt{supp}(\grad\psi)} \inner{\grad\psi\cdot \grad^{k-1}u, \grad^k u} dV_g$. Note \begin{align}
\inner{\grad\psi\cdot \grad^{k-1}u, \grad^k u} 
&=   g^{i_1j_1}\grad_{i_1}\psi\left(g^{i_2j_2}\dots g^{i_kj_k}(\grad^{k-1}u)_{i_2\dots i_k} (\grad^{k}u)_{j_1j_2\dots j_k}\right)\nonumber \\
&= g^{i_1j_1}\grad_{i_1}\psi\left(g^{i_2j_2}\dots g^{i_kj_k}(\grad_{i_2}\dots\grad_{i_k}u) (\grad_{j_1}\grad_{j_2}\dots\grad_{j_k}u)\right)\nonumber\\
& = \textstyle{\frac{1}{2}}g^{i_1j_1}\grad_{i_1}\psi\grad_{j_1}(\ell(\grad^{k-1}u, \grad^{k-1}u)^2) = \textstyle{\frac{1}{2}}\inner{\grad\psi, \grad\ell(\grad^{k-1}u, \grad^{k-1}u)^2}\label{eq_mixed_grad_term_as_ell}
\end{align}
By showing that there exists a $\psi$ that makes $\omega(\psi) \geq 0$, since $(\lambda^2-1) < 0$, from equation~\ref{eq_why_psi_must_be_positive} it will follow that  equation~\ref{eq_bound_over_U_with_psi} cannot hold.

Assume that $U$ is small enough to be covered by a geodesic normal coordinates, and consider polar coordinates on $U$ centered at $x$. Define $\tau^c_s$ on $U$ for $c,s\in \RR_{>0}$ such that $\tau_s(x)=c$ and then decays linearly in radially outwards direction with slope $-s$ to $0$ at $\partial B_{R}(x)$ with $c,s$ such that $\closure{\txt{supp}(\tau^c_s)}\subset U$, $R$ depending on $c,s$. Then $\tau^c_s$ is continuous, piecewise continuously differentiable, with compact support in $U$, so weakly-differentiable, and $\grad \tau^c_s = -s\one_{\B_{R}(x)}$ (there's enough slack to work with mollified versions of $\tau$'s, but weak-differentiablility suffices for simplicity).  If for some $\tau^c_s$, $\omega(\tau^c_s) \geq 0$ then that $\psi=\tau^c_s$ is the required $\psi$. 

If not, then $\omega(\tau^c_s) < 0$ for all $c$ small enough to have support in $U$. 
By rescaling wlog assume $c=s=1$, and set $\tau_1:=\tau^1_1$ ( otherwise the constants are messy). For such $\tau_1$, define $\tau'_1$ such that $\tau'_1(x)=0,$ and $\tau'_1$ increases linearly to $1$ at $\partial B_1(x)$, and outside of $\closure{B_1(x)}, \tau'_1 = 0$. Then $\omega(\tau'_1) = -\omega(\tau_1) = \delta > 0$ since $\grad\tau_1' = -\grad\tau_1$ on $\txt{supp}(\grad\tau_1') = \txt{supp}(\grad\tau_1)$. It remains to make $\tau_1'$ continuous without changing $\omega(\tau'_1)$ too much. For this set $\tau_{1,r}'' = \tau_1'$ on $B_1(x)$, $\tau_{1,r}'' =0$ on $B_{1+r}(x)^c$, and on $B_{1+r}(x)^c \setminus  B_{1}(x)$, $\tau''_{1,r}$ decays linearly to $0$ on $\partial B_{1+r}(x)$. Finally, since for all $r>0$ small enough, $\tau''_{1,r}$ is piecewise continuous, continuously differentiable and compactly supported in $U$, it remains to check for any $\eps>0$ there exists $r_\eps>0$ such that for all $r<r_\eps$, $\norm{\omega(\tau'_s) - \omega(\tau''_{s, r})}\leq \eps$. Note that \begin{align}
 2\norm{\omega(\tau'_1) - \omega(\tau''_{1, r})} =
 \norm{\int_{B_{1+r}(x)(x)\setminus B_{1}(x)}  \inner{\tau_s'', -\Div(\grad(\ell(\grad^{k-1}(u))^2))} dV_g}
\label{eq_finally_make_tau_difference_small}
\end{align}
using equation~\ref{eq_mixed_grad_term_as_ell} and so $\norm{\omega(\tau'_1) - \omega(\tau''_{1, r})}\leq \eps$ for small $r$
because  
$-\Div(\grad(\ell(\grad^{k-1}(u))^2))=\laplace(\ell(\grad^{k-1}(u))^2),$ $ \tau''_{s, r}$ being continuous (with $M$ compact) are bounded. 
\end{proof}

The idea above generalizes to all $\phi_i$'s giving a simple proof for growth bounds on covariant derivatives from \cite{bin2004derivatives} in the Ricci flat setting. 

\begin{corollary}
For $u=\phi_j$, with $\lambda^2:=\lambda^2_j$, $\norm{\grad^k u }_\infty  \leq \lambda^2\norm{\grad^{k-1}u }_\infty$
\end{corollary}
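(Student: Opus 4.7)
The plan is to carry over the proof of Proposition~\ref{prop_gradients_are_contractive} essentially verbatim, with the single modification that the comparison quantity $\inner{\grad^{k-1}u, \grad^{k-1}u}$ is replaced by $\lambda^2\inner{\grad^{k-1}u, \grad^{k-1}u}$. The hypothesis $\lambda^2<1$ was used there only to fix the sign of the coefficient $(\lambda^2-1)$ in equation~\ref{eq_why_psi_must_be_positive}; in the present setting the analogous coefficient cancels identically after substitution, so no spectral restriction is required.

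I would assume for contradiction that there exists $x\in M$ with $\inner{\grad^k u(x), \grad^k u(x)} > \lambda^2\inner{\grad^{k-1}u(x), \grad^{k-1}u(x)}$. By smoothness of $u$, the inequality extends to a geodesic normal neighborhood $U$ of $x$, and for any admissible cutoff $\psi$ (non-negative, compactly supported in $U$, strictly positive on some open $V\subset U$) one obtains
\begin{align*}
0 < \int_U \psi\inner{\grad^k u, \grad^k u}\, dV_g \;-\; \lambda^2\int_U \psi\inner{\grad^{k-1}u, \grad^{k-1}u}\, dV_g.
\end{align*}
The integration-by-parts computation leading to equation~\ref{eq_bound_over_U_with_psi} applies unchanged: Ricci-flatness gives $\laplace\grad^{k-1}u=\grad^{k-1}\laplace u = \lambda^2\grad^{k-1}u$ and the boundary term vanishes since $\psi$ is compactly supported in $U$, yielding
\begin{align*}
\int_U \psi\inner{\grad^k u, \grad^k u}\, dV_g \;=\; \lambda^2\int_U \psi\inner{\grad^{k-1}u, \grad^{k-1}u}\, dV_g \;-\; \omega(\psi),
\end{align*}
with $\omega(\psi)$ as defined in the proof of the preceding proposition. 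Substituting and cancelling the $\lambda^2$-terms reduces the strict inequality to $-\omega(\psi)>0$, i.e. $\omega(\psi)<0$ for every admissible $\psi$.

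The contradiction is then extracted by the $\tau^c_s/\tau'_1/\tau''_{1,r}$ construction of Proposition~\ref{prop_gradients_are_contractive}: since $\grad\tau'_1=-\grad\tau_1$ on their common support, $\omega(\tau'_1)=-\omega(\tau_1)$, so at least one of the two satisfies $\omega(\cdot)\geq 0$, and the continuous mollification $\tau''_{1,r}$ together with the approximation estimate equation~\ref{eq_finally_make_tau_difference_small} produces an admissible smooth $\psi$ with $\omega(\psi)\geq 0$, contradicting the strict negativity above. I expect no genuine obstacle; the only item worth double-checking is that the $r\to 0$ approximation bound in equation~\ref{eq_finally_make_tau_difference_small} is controlled by $\|\laplace(\ell(\grad^{k-1}u)^2)\|_\infty$ on a shrinking spherical shell and thus is independent of $\lambda^2$, so the argument carries through for all eigenvalues without modification.
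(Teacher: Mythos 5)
Your proof is correct and follows the same underlying strategy as the paper: integrate against a cutoff $\psi$, integrate by parts using Ricci-flatness to replace $-\Div(\grad^k u)$ by $\lambda^2\grad^{k-1}u$, and then contradict by constructing an admissible $\psi$ with $\omega(\psi)\geq 0$ via the $\tau$-construction from Proposition~\ref{prop_gradients_are_contractive}. The one genuine difference is worth noting: the paper introduces a slack factor $c>1$ in the contradiction hypothesis (so that the coefficient $\lambda^2(1-c)<0$ survives, mirroring the $(\lambda^2-1)<0$ structure of Proposition~\ref{prop_gradients_are_contractive}), whereas you observe that the $\lambda^2$-terms cancel identically, leaving $-\omega(\psi)>0$ directly. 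Your variant is cleaner in two small ways: it needs only the single fact $\omega(\psi)\geq 0$ rather than additionally the strict negativity of $\lambda^2(1-c)\int_U\psi\|\grad^{k-1}u\|^2\,dV_g$, and it makes no implicit use of $\lambda^2>0$ or $\int_U\psi\|\grad^{k-1}u\|^2\,dV_g>0$, so the $\lambda=0$ (constant eigenfunction) case is handled uniformly rather than as a degenerate edge case.
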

\begin{proof} Assume not, then on some open $U\subset M$, for all $x\in U$, for some $c>1$, $$
\inner{\grad^{k}u(x), \grad^{k}u(x)} - c\lambda^2\inner{\grad^{k-1}h(x), \grad^{k-1}u(x)} > 0
$$ and as in proposition~\ref{prop_gradients_are_contractive} for some $\psi\geq 0$ compactly supported in $U, \psi >0$ on an open set, \begin{align}\label{eq_general_growthboundeq_1}
\int_U \inner{\psi\grad^{k}u(x), \grad^{k}u(x)}dV_g - \int_U c\lambda^2\psi\inner{\grad^{k-1}h(x), \grad^{k-1}u(x)}dV_g > 0 
\end{align}
 $$
 \txt{with }\int_U \inner{\psi\grad^{k}u(x), \grad^{k}u(x)}dV_g =  \int_U \inner{\grad(\psi\grad^{k-1}u(x)), \grad^{k}u(x)}dV_g - \int_U \inner{\grad\psi \cdot \grad^{k-1}u, \grad^{k}u}  dV_g
$$
Therefore, equation~\ref{eq_general_growthboundeq_1} becomes \begin{align*}
\int_U \lambda^2&\inner{\psi\grad^{k-1}u(x), \grad^{k-1}u(x)}dV_g - \int_U \inner{\grad\psi \cdot \grad^{k-1}u, \grad^{k}u}  dV_g
 - \int_U c\lambda^2\psi\inner{\grad^{k-1}h(x), \grad^{k-1}u(x)}dV_g \\
 &= \lambda^2(1- c)\int_U\inner{\psi\grad^{k-1}u(x), \grad^{k-1}u(x)}dV_g - \int_U \inner{\grad\psi \cdot \grad^{k-1}u, \grad^{k}u}  dV_g > 0
\end{align*}
But choosing $\psi$ as in~\ref{prop_gradients_are_contractive}, since $\lambda^2(1-c) < 0$, the last inequality cannot hold.
\end{proof}

\noindent Therefore, $\norm{\grad^k \phi_i }_\infty$ is uniformly bounded  for all $\phi_i$.
Collecting this with propositions~\ref{prop_laplace_bound_by_grad2} and \ref{lemma_laplacian_product_rule} along with corollary~\ref{corollary_polynomial_ggrowth_kgradients}, yields easily implies the following bounds.

\begin{theorem}\label{thm_laplacian_growthbounds}For all $k\in \NN, \phi\in \Fc$,\begin{enumerate}[label=\roman*]
\item There exists constants $C_\phi, K_\phi$ such that $\norm{\laplace^k(\phi)}\leq C_\phi K_\phi^k$ 
\item For any $k, N\in\NN$, the map $\laplace^k\tsr \one_{\mat_N}:\mat_N(\Fc)\to \mat_N(\C^\infty(M))$, $\norm{\laplace^k\tsr \one_{\mat_N}(\xi)} \leq C_{\xi}K^k_{\xi}$ for $\xi=[\xi_{ij}]\in \mat_N(\Fc)$. The same holds for $(\laplace\tsr \one_{\mat_N})^k$. 
\end{enumerate}
\end{theorem}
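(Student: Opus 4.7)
The plan is to deduce both parts of the theorem from Proposition \ref{prop_laplace_bound_by_grad2} together with a Leibniz expansion of $\grad^k$ applied to products of eigenfunctions, whose individual factors are controlled by Corollary \ref{corollary_polynomial_ggrowth_kgradients} and the subsequent uniform bound $\norm{\grad^k\phi_j}_\infty \leq \lambda_j^{2k}\norm{\phi_j}_\infty$ for the low-eigenvalue regime. By linearity of $\laplace^k$ and the triangle inequality, it suffices to establish part (i) for a single product $\phi = \phi_{i_1}\cdots\phi_{i_m}$ of basis eigenfunctions; an arbitrary $\phi\in\Fc$ is a finite linear combination of such products and the constants aggregate as $C_\phi = \sum_l |\alpha_l| C_{\phi^{(l)}}$, $K_\phi = \max_l K_{\phi^{(l)}}$.

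For such a product I would apply the Leibniz rule for the covariant derivative acting on scalar functions, which (after symmetrization of indices) takes the form
\[
\grad^k\phi \;=\; \sum_{k_1+\cdots+k_m=k}\binom{k}{k_1,\ldots,k_m}\, \grad^{k_1}\phi_{i_1}\tsr\cdots\tsr\grad^{k_m}\phi_{i_m}.
\]
Since the pointwise length $\ell(\cdot)$ defined in equation~\ref{label_length_of_gradient_xu2004} is submultiplicative under tensor products of covariant tensors, the triangle inequality plus the uniform pointwise estimate $\ell(\grad^{k_j}\phi_{i_j})(x)\leq A_{i_j} B_{i_j}^{k_j}$ (obtained from Corollary \ref{corollary_polynomial_ggrowth_kgradients} when $\lambda_{i_j}\geq 1$, and from the bound $\ell(\grad^{k_j}\phi_{i_j})\leq \norm{\grad^{k_j}\phi_{i_j}}_\infty \leq \lambda_{i_j}^{2k_j}\norm{\phi_{i_j}}_\infty$ otherwise) gives, via the multinomial identity,
\[
\ell(\grad^k\phi)(x) \;\leq\; \Big(\prod_{j=1}^m A_{i_j}\Big)\Big(\sum_{j=1}^m B_{i_j}\Big)^k.
\]
Combining with Proposition \ref{prop_laplace_bound_by_grad2} then yields $\norm{\laplace^k\phi}\leq C_\phi K_\phi^k$ with $K_\phi = (\dim M)^{1/2}\sum_j B_{i_j}$, settling (i). Proposition \ref{lemma_laplacian_product_rule} can be used as an alternative route: iterating the product formula for $\laplace$ produces mixed terms $\inner{\grad^p\phi_{i_r},\grad^p\phi_{i_s}}$, each bounded by Cauchy-Schwarz in terms of the same $A_{i_j}B_{i_j}^{p}$, leading to the same geometric-in-$k$ estimate.

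For part (ii), writing $\xi=[\xi_{ij}]\in\mat_N(\Fc)$, the map $\laplace^k\tsr\one_{\mat_N}$ acts entrywise as $[\laplace^k\xi_{ij}]$. Since the operator norm on $\mat_N(\C(M))$ is bounded above by $N$ times the maximum $\C(M)$-norm of the entries, part (i) gives
\[
\norm{(\laplace^k\tsr\one_{\mat_N})(\xi)} \;\leq\; N\max_{i,j}\norm{\laplace^k\xi_{ij}} \;\leq\; N\Big(\max_{i,j}C_{\xi_{ij}}\Big)\Big(\max_{i,j}K_{\xi_{ij}}\Big)^k,
\]
from which $C_\xi, K_\xi$ are read off. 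Because $\one_{\mat_N}$ is idempotent and commutes with $\laplace$, $(\laplace\tsr\one_{\mat_N})^k = \laplace^k\tsr\one_{\mat_N}$, so the identical estimate applies to the iterated operator.

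The only genuinely technical point is the Leibniz formula for iterated covariant derivatives on a product of scalars and the submultiplicativity of $\ell(\cdot)$ under tensor products of $1$-forms; both are routine in a local orthonormal frame but are what generate the exponential-in-$k$ factor $\left(\sum_j B_{i_j}\right)^k$ matching the statement. Everything else is bookkeeping in the constants.
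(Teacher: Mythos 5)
Your proposal is correct and takes essentially the same approach as the paper; the paper's proof of this theorem is terse (``follows from estimates on $\laplace^k$, how $\laplace^k$ acts on products of $\phi_i$'s and $\grad$ being a derivation''), and your Leibniz-plus-multinomial expansion for $\grad^k$ on products of eigenfunctions, combined with Corollary~\ref{corollary_polynomial_ggrowth_kgradients} and the low-frequency contractivity estimate, is exactly the content that phrase is gesturing at; your alternative route via Proposition~\ref{lemma_laplacian_product_rule} matches the paper's other cue even more closely. One minor bookkeeping point worth being alert to: as literally printed Proposition~\ref{prop_laplace_bound_by_grad2} states $\norm{\laplace^k f}\leq(\dim M)^{k/2}\ell(\grad^k f)$, but since $\laplace^k$ contracts $2k$ covariant derivatives the exponent should be $\grad^{2k}$ — this only changes $K_\phi$ to $(\dim M)^{1/2}\big(\sum_j B_{i_j}\big)^2$ and leaves the geometric-in-$k$ conclusion, and both parts, intact; similarly your $N\max_{i,j}$ matrix-norm bound in part~(ii) is a clean substitute for the paper's $\max_i\big(\sum_j C_{ij}\cdot(\sum_j M_{ij})^k\big)$ estimate, both being valid bounds of the required form.
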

\begin{proof} Part $i)$ follows from estimates on $\laplace^k$, how $\laplace^k$ acts on products of $\phi_i$'s and $\grad$ being a derivation. For part $ii)$, by growth bounds on $\laplace^k$,  $\laplace^k(\xi_{ij})\leq C_{ij}M_{ij}^k$ for some constants $C_{ij}, M_{ij}$. so \begin{align} \norm{\laplace^k\tsr \one_{\mat_N}(\xi)} = \norm{[\laplace^k(\xi_{ij})]} \leq \max_i\left(\sum_j C_{ij} \cdot \left(\sum_j M_{ij}\right)^k\right). 
\end{align}
\end{proof}
\noindent Note that theorem~\ref{thm_laplacian_growthbounds}, $ii)$ is needed to guarantee that the map-valued process $J_t$ evaluated on exponential vectors is pointwise bounded. 

\section{Picard iterates for the Laplacian generated flow}\label{section_picard_iterates_convergence}

We now specialize the map-valued Evans-Hudson qsde’s to the Laplacian flow generator, $\Theta$: 
from equation~\ref{eq_structure_matrix}, \begin{align}
\Ac_\infty\supset \Fc \ni f\to \Theta(f) = \begin{pmatrix}
-\laplace(f)/2  & \delta^\dagger(f)\\
1\tsr df & 0
\end{pmatrix} \in \Bc(H\tsr(\CC\oplus k_0)) 
\end{align}
where $k_0= L^2(\Omega^1(M))$, the noise space, and $H=L^2(S)$, the square integrable spinors on which $\C(M)$ acts. Recall $(k_0)_\infty := d\Fc = \{df:f\in \Fc\}\subset L^2(\Omega^1(M)$. $k=L^2(\RR_+, k_0)$ For $f\in \CC\oplus k_0$. For $f\in k_0$, set $\hat f := 1\oplus f \in \CC\oplus k_0$ and identify $f$ with $0\oplus f\in \CC\oplus k_0$

Towards establishing the existence of the quantum flow of Evans-Hudson type under the much weaker regularity assumption and without the Frechet space scaffolding from \cite[Thm~8.1.38]{goswami_qsp}; the usual ideas from the literature can be leveraged, however, crucial estimates need to be made from ground up which we establish next. Only a sketch of known results into which these estimates are plugged are included how they made fit into the scheme. A crucial point is an extension of the squareroot trick, which is needed since in general the algebra $\Fc$, the algebra consisting of finite products and sums of eigenfunctions of $\laplace$, is not closed under squareroots since derivatives of squareroots grow factorially while on $\Fc$ the covariant derivatives satisfy growth bounds of type $\ell(\grad^k f)\leq C_fM_f^k$. For background on map-valued qsdes, we refer to the included appendix~\ref{section_mapvalued_qsdes}.  To start, recall the following estimates for map-valued processes, $a_\delta, a_\delta^\dagger, I_\Lc$ being the fundamental processes (appendix equations~\ref{eq_fundamental_processes1},\ref{eq_fundamental_processes2}): 
\begin{estimate}{\cite[Thm~5.4.7,8.1.37]{goswami_qsp}}\label{estimate_mapvalued_integrals}
For a map-valued integrable process $Y_s$, 
\begin{align}
\norm{\int_0^t Y_s\circ (a_\delta + I_\Lc)(ds)(x\tsr \ExpHVec(f))u}^2 &\leq e^t\int_0^t\norm{Y_s(\Lc(x)+ \inner{\delta(x^*), f(s)})\tsr \ExpHVec(f))u}^2ds\label{eq_annihilation_integral} \\
\norm{\int_0^t Y_s\circ (a^\dagger_\delta)(ds)(x\tsr \ExpHVec(f))u}^2 &\leq e^t\int_0^t\left(\norm{\wtilde{Y_s}(\delta(x)\ExpHVec(f))u}^2 + \norm{Y_s(\inner{f(s), \delta(x)})\ExpHVec(f)u}^2 \right)ds\label{eq_creation_integral}
\end{align}
\end{estimate}
There's the following characterization for an integrable map-valued process generated by the structure maps $\Lc=-\laplace/2, \delta$ from $\Theta$ through the Picard iteration scheme, the convergence of which will yields the solution to the qsde needed.
\begin{lemma}{\cite[lemma~8.1.37]{goswami_qsp}}\label{lemma_picard__recursive_estimate} Let $\Vc=\{(d\Fc \txt{-valued simple functions}\}$, $\ExpHVec(\Vc)$ the exponential vectors, and $J^{(0)}:\Fc\tsr \ExpHVec(\Vc)\to \Ac\tsr \Fock(k_0)$ be the identity map, then with $J^{(0)} =  \one$, \begin{align} , J^{(n+1)}(t) = \int_0^t J^{(n)}(s)\circ (a^\dagger_\delta + a_\delta
 + I_\Lc)(ds), \ \ J^{(n+1)}:\Fc\tsr \ExpHVec(\Vc)\to \Ac\tsr \Fock(k_0)
\end{align}
each $J^{n}$ is an a map-valued integrable process (by definition linear, but not necessarily completely smooth), Additionally, 
the following estimates hold 
\FRem{\cite[Thm~5.4.7,pg~139]{goswami_qsp}}, 
\begin{align}\label{eq_picard_iterates}
\norm{J^{(n+1)}_t(x\tsr \ExpHVec(f))u}^2 &\leq 2\left(\norm{\int_0^t J^{(n)}_s\circ (a_\delta + I_\Lc)(ds)(x\tsr \ExpHVec(f))u}^2  + \norm{\int_0^t J^{(n)}_s\circ (a^\dagger_\delta)(ds)(x\tsr \ExpHVec(f))u}^2  \right)
\end{align}
\end{lemma}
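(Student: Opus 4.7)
The plan is to prove both parts of the lemma by induction on $n$, with the norm estimate following as a trivial consequence of the parallelogram inequality once the integrals are known to be well-defined.

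For the base case, $J^{(0)} = \one$ is the identity on $\Fc \tsr \ExpHVec(\Vc)$, so it is trivially a linear map-valued integrable process. For the inductive step, assuming $J^{(n)}$ is an integrable map-valued process, I would define $J^{(n+1)}(t)$ by the stochastic integral on the right-hand side and verify that the integrand $J^{(n)} \circ (a^\dagger_\delta + a_\delta + I_\Lc)$ meets the hypotheses for a map-valued quantum stochastic integrand from appendix~\ref{section_mapvalued_qsdes}. The key inputs are: $\Lc(x) = -\laplace(x)/2$, which by Theorem~\ref{thm_laplacian_growthbounds}(i) satisfies $\norm{\laplace^k(x)} \leq C_x K_x^k$ for $x \in \Fc$; $\delta(x) = 1 \tsr dx$, which lies in $\Ac \tsr d\Fc \subset \Ac \tsr k_0$; and for $f \in \Vc$ with $f(s) \in d\Fc$, the contraction $\inner{\delta(x^*), f(s)}$ is smooth on $M$. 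Together with the matrix-valued extension in Theorem~\ref{thm_laplacian_growthbounds}(ii), these give the pointwise boundedness required to apply Estimate~\ref{estimate_mapvalued_integrals} to the integrand of $J^{(n+1)}$.

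The norm bound in equation~\ref{eq_picard_iterates} then follows by writing the recursion as $J^{(n+1)}_t = A + B$ with $A = \int_0^t J^{(n)}_s \circ (a_\delta + I_\Lc)(ds)$ and $B = \int_0^t J^{(n)}_s \circ a^\dagger_\delta(ds)$, and applying $\norm{A+B}^2 \leq 2\norm{A}^2 + 2\norm{B}^2$. Grouping the annihilation with the time integral while separating out the creation is natural, since Estimate~\ref{estimate_mapvalued_integrals} provides one bound for the $(a_\delta + I_\Lc)$-integral and a distinct bound for the $a_\delta^\dagger$-integral.

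The main obstacle is verifying well-definedness of the iterates, because $\Fc$ is not closed under $\laplace$: for instance, $\laplace(\phi_i \phi_j)$ contains the term $\inner{\grad \phi_i, \grad \phi_j}$, which is smooth on $M$ but typically not in $\Fc$. Hence the Picard iterates cannot preserve $\Fc \tsr \ExpHVec(\Vc)$ as a domain in the naive sense, and one must view $J^{(n)}$ as landing in $\Ac \tsr \Fock(k_0)$ with the effective domain described only implicitly — this is precisely why the statement weakens ``completely smooth'' to merely ``map-valued integrable''. The growth bounds of Theorem~\ref{thm_laplacian_growthbounds} hold uniformly on $\Fc$ and its matrix extensions, and this uniformity is exactly what is needed to justify each stochastic integral at level $n+1$ and to feed into the convergence argument for the full qsde that follows this lemma.
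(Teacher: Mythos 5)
Your proposal is correct in substance and follows essentially the same logic as the paper's (very terse) proof: boundedness of the structure maps on fixed inputs guarantees integrability, and the squared-sum inequality $\norm{A+B}^2 \leq 2\norm{A}^2 + 2\norm{B}^2$ gives the stated bound once the integral is split into the $(a_\delta + I_\Lc)$ and $a_\delta^\dagger$ pieces. The inductive framing is implicit in the paper but is the natural way to organize the argument, and your identification of the correct decomposition and its compatibility with Estimate~\ref{estimate_mapvalued_integrals} matches the intent.

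Two small calibration points. First, the growth bounds of Theorem~\ref{thm_laplacian_growthbounds} are not actually load-bearing for this lemma. Well-definedness of each finite iterate $J^{(n)}$ only needs that $\Lc$ maps smooth functions to smooth functions and that $M$ is compact, so that any finite composition $\Lc^{m}(x)$ has finite sup-norm; the exponential growth control $\norm{\laplace^k(\phi)} \leq C_\phi K_\phi^k$ only becomes essential later, in Estimate~\ref{estimate_creation_annihilation_terms} and Theorem~\ref{app_thm_picard_iterates_on_F}, where convergence of the Picard series $\sum_n J^{(n)}$ is the question. Invoking them here is harmless but over-equipped. Second, your diagnosis of why the resulting process is only ``map-valued integrable'' rather than completely smooth -- namely that $\Fc$ is not closed under $\laplace$, so the iterates escape $\Fc$ -- is a genuine and useful observation, but it is not the paper's stated reason. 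The paper attributes the failure of complete smoothness to the lack of a covariant group action supplying a Fréchet seminorm structure on $k_0$ against which the regularity conditions of Definition~\ref{defn_integrabile_process} would be checked; $\Fc$ not being $\laplace$-invariant is a related but distinct domain subtlety that the paper handles implicitly by working with $\Ac_\infty = C^\infty(M)$ internally and only asserting bounds on the restriction to $\Fc \tsr \ExpHVec(\Vc)$. Both issues are real, but they are not the same issue.
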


\begin{proof} The continuity requirements for existence of the integral are satisfied since for each fixed $\ExpHVec(f)$ and $a$, the maps $\delta, \laplace$ are bounded.
The inequalities follow from standard theory. The Laplacian flow generator satisfies much weaker assumptions, therefore the resultant process is not completely smooth. \end{proof}

The Picard iterates defined by $S_N(t) = \sum_{n\leq N} J_t^{(n)}(x\tsr \ExpHVec(f))$ can be shown to converge on the exponential vectors following same scheme as \cite[Thm~8.1.38]{goswami_qsp} after plugging in the following estimates which need to be obtained differently as $\Theta$ has much less regularity. To motivate the estimates we sketch the convergence arguments. 

The first term in \RHS for equation~\ref{eq_picard_iterates}, using the definition of map-valued integrals (see appendix~\ref{section_mapvalued_qsdes}.~\ref{eqs_deinition_mapvalued_integrals}) can be recursively expanded using via estimate~\ref{estimate_mapvalued_integrals}, inequality ~\ref{eq_annihilation_integral}: $$
e^t\int_0^t\norm{J^{(n)}_s(\Lc(x)+ \inner{\delta(x^*), f(s)})\tsr \ExpHVec(f))u}^2ds = e^t\int_0^t \norm{\int_0^s J^{(n-1)}_{s^1}\circ (a_\delta + I_\Lc)(ds^1)(x\tsr \ExpHVec(f))u}^2 ds
$$
Since $f$ is simple, all terms depending on $f$ in above can be uniformly bound by a constant $B_f$, and recursively applying inequality~\ref{eq_annihilation_integral} to the \RHS, till reaching $J^{(0)}=\one$ yields:  
\begin{align}\label{eq_annihilation_term_control}
\norm{J^{(n+1)}(t)(a\tsr \ExpHVec{f})}^2 &\leq (2e^tB)^n\norm{\ExpHVec(f)}^2 \int_0^t\int_0^s\int_0^{s_1}\dots \norm{\Phi_{\hat f(s)}(\Phi_{\hat f(s_1)}\dots \Phi_{\hat f(s_{n-1})}(x))} ds ds_1\dots ds_{n-1}\nonumber \\
\end{align}
where $\Phi_{f(s)}(x):=\Lc(x)+ \inner{\delta(x^*), f(s)}$. Since $f$ is simple, $\Range[f]$ is finite: for each $s\in [0, t]$, $f(s)\in 
\{d\zeta_i \in d\Fc, i\in [r], r\in \NN\}\equiv \Range[f]$. This means one must control \begin{align}\label{eq_needed_bound1}
\norm{\Phi_{\xi_{i_n}}(\Phi_{\xi_{i_{n-1}}}(\dots \Phi_{\xi_{i_1}}(x)))}
\end{align}
where each $\xi_{i_k} \in \Range[f]$. Similarly, for the second term in \RHS for equation~\ref{eq_picard_iterates}, from equation~\ref{eq_creation_integral} (also, see appendix equation~\ref{eq_ytilde_integrator}), we have $$
\norm{\int_0^t J^{(n)}_s\circ (a^\dagger_\delta)(ds)(x\tsr \ExpHVec(f))u}^2  \leq e^t\int_0^t\left(\norm{\wtilde{J^{(n)}}(\delta(x)\tsr\ExpHVec(f))u}^2 + \norm{J^{(n)}(\inner{f(s), \delta(x)})\ExpHVec(f)u}^2 \right)ds
$$
The $\|J^{(n)}(\inner{f(s), \delta(x)})\ExpHVec(f)u\|^2$ term is controlled exactly as equation~\ref{eq_annihilation_term_control} via a corresponding estimate on nested $\Phi'_{f(s)}(x):=\inner{f(s), \delta(x)}$. The $\|\wtilde{J^{(n)}}(\delta(x)\tsr\ExpHVec(f))u\|^2$ term lives on the Fock spaces it can be controlled by same recursive expansion using $\Phi''_{f(s)}(x):=\delta(x)\tsr\ExpHVec(f(s))$ since by definition (see appendix equation~\ref{eq_ytilde_integrator_in_terms_of_creation} and remark~\ref{remark_extending_ytilde}), integral with respect to $a^\dagger_\delta$ is given by -- \begin{align*}
\left(\int_0^t J^{(n)}(s)\circ (a^\dagger_\delta)(ds)   \right)(x\tsr \ExpHVec(f))u &= \int_0^t\left( \wtilde{J^{n}(s)}(\delta(x)\tsr \ExpHVec(f_s))u\right)ds\\ &= \int_0^t \left(\left( \int_0^s J^{(n-1)}(s)\circ (a^\dagger_\delta)(ds_1)   \right)(\delta(x)\tsr \ExpHVec(f_s))\right)u ds \\
&=\int_0^t \left( \int_0^s \wtilde{J^{(n-1)}}((\delta\tsr \one) (\delta(x)\tsr \ExpHVec(f_s)))u \right) ds_1 ds
\end{align*}
so one needs to bound both of the following \begin{align}\label{eq_needed_bound3}
\norm{\Phi'_{\xi_{i_n}}(\Phi'_{\xi_{i_{n-1}}}(\dots \Phi'_{\xi_{i_1}}(x)))},\  \norm{\delta^n(x)}
\end{align}
\begin{estimate}\label{estimate_creation_annihilation_terms} There exist constants, $C_{f, x}, M_{f, x}$ such that \begin{align*}
    i)\ &\norm{\Phi_{\xi_{i_n}}(\Phi_{\xi_{i_{n-1}}}(\dots \Phi_{\xi_{i_1}}(x)))} \leq C_{f, x} M_{f, x}^n\ \ \ ii) \ 
    \norm{\Phi'_{\xi_{i_n}}(\Phi'_{\xi_{i_{n-1}}}(\dots \Phi'_{\xi_{i_1}}(x)))} \leq C_{f, x} M_{f, x}^n \ \ \ 
    iii) \  \norm{\delta^m(x)} \leq C_{f, x} M_{f, x}^n
\end{align*}

That is, three terms (equations~\ref{eq_needed_bound1},\ref{eq_needed_bound3}) satisfy bound of form $CK^n$ for $C, K$ depending on the simple function $f$ and independent from $n$ (note the constants for $\Phi, \Phi', \Phi''$ may be different, but taking the maximum, yields a single pair that works). \end{estimate}

\begin{proof} $i)$ Define $a_n:= \Phi_{\xi_{i_n}}(\Phi_{\xi_{i_{n-1}}}(\dots \Phi_{\xi_{i_1}}(x)))$. Since each $\xi_{i_k}\in d\Fc$, let $\xi_{i_k} = dz_k$. Now $\Phi_{z_k}(a') = \laplace' a' + \inner{da', dz_k} = \laplace' a' + \inner{\grad a', \grad z_k}$ where $\laplace' := -\laplace/2$. Therefore, \begin{align*}
\norm{a_n} = \Phi_{z_n}(a_{n-1}) &= \laplace' a_{n-1} + \inner{\grad a_{n-1}, \grad z_n} = g_{i_nj_n}\left(-\frac{1}{2}\grad^{i_n}\grad^{j_n}a_{n-1} + \grad^{i_n}a_{n-1}\grad^{j_n}z_n\right)\\ 
&= g_{i_nj_n}g_{i_{n-1}j_{n-1}}\left(-\frac{1}{2}\grad^{i_n}\grad^{j_n}\left(-\frac{1}{2}\grad^{i_{n-1}}\grad^{j_{n-1}}a_{n-2} + \grad^{i_{n-1}}a_{n-2}\grad^{j_{n-1}}z_{n-1}\right)\right)\\
&\ \ \ \ \ \ + g_{i_nj_n}g_{i_{n-1}j_{n-1}}\left(\grad^{i_n}\left(-\frac{1}{2}\grad^{i_{n-1}}\grad^{j_{n-1}}a_{n-2} + \grad^{i_{n-1}}a_{n-2}\grad^{j_{n-1}}z_{n-1}\right)\grad^{j_n}z_n\right)
\end{align*}
Therefore, each recursive expansion of $\Phi_{z_k}$ doubles the number of terms. After $n$ recursive expansion there are $2^n$ terms, consisting of $n$ copies of the metric, $g_{i_nj_n}g_{i_{n-1}j_{n-1}}\dots g_{1_1j_1}$, contracting against the covariant derivatives acting on $x, z_1\dots z_n$. Note that $z_i$'s appear in increasing order left to right by the recurence structure, and $x$ is at the left most, making appearance only at the last step. Let $P_{n+1}[I]$ be a partition of set $I:=\{i_k,j_k:k\in [n]\}$ into $n+1$ \emph{ordered} sets $P_{n+1} = \{A_i\subset I:i\in [n+1]\}$, with some $A_i$'s possibly empty. For $A\in P_{n+1}[I]$ with $\grad^{A}, A=[a_1, a_2\dots a_{|A|}]$ denote the operator $\grad^{a_1}\grad^{a_2}\dots\grad^{a_{|A|}}$,and by $\grad_{A}$ the corresponding operator with indices lowered. Then after $n$ recursive expansions, each term corresponds to some partition $P_{n+1}[I] =\{A_i:i\in [n+1]\}$. Writing the tensor contraction as an explicit innerproduct, $$
g_{i_nj_n}g_{i_{n-1}j_{n-1}}\dots g_{1_1j_1} \grad^{A_1}x \grad^{A_2}z_1 \grad^{A_3}z_2\dots \grad^{A_{n+1}}z_n
= \inner{g_{i_nj_n}g_{i_{n-1}j_{n-1}}, \grad_{A_1}x \grad_{A_2}z_1 \grad_{A_3}z_2\dots \grad_{A_{n+1}}z_n} $$
where with empty $A_i$ meaning the term is missing. Applying Cauchy-Schwarz, and dropping the $-1/2$ factors as we are only interested in an upper bound on each term,
\begin{align*}
\inner{g_{i_nj_n}g_{i_{n-1}j_{n-1}}, \grad_{A_1}x \grad_{A_2}z_1 \grad_{A_3}z_2\dots \grad_{A_{n+1}}z_n} &\leq (\dim M)^{n/2}\sqrt{\inner{\grad_{A_1}x \grad_{A_2}z_1 \dots \grad_{A_{n+1}}z_n, \grad_{A_1}x \grad_{A_2}z_1  \grad_{A_{n+1}}z_n}}\\
&\leq (\dim M)^{n/2}\sqrt{\inner{\grad_{A_1}x, \grad_{A_1}x}\dots \inner{\grad_{A_{n+1}}z_n, \grad_{A_{n+1}}z_n}} \\
&\leq (\dim M)^{n/2}\ell(\grad^{|A_1|}x)\ell(\grad^{|A_2|})z_1\dots\ell(\grad^{|A_{n+1}|}z_n)
\end{align*}

For each $\phi\in\{x, z_n\}\equiv T_{x,f}$, there are constants $C_\phi, M_\phi$ such that $\ell(\grad^k\phi)\leq C_\phi M_\phi^k$. Wlog, assume each $C_\phi=1$, by absorbing it into $M_\phi$. Now since each $dz_n\in \Range[f], |\Range[f]|=r$, the choice $M=\max_{\phi\in T_{x, f}} M_\phi$ is independent of $n$, and using that $\sum_{i\in [n+1]}|A_i| = 2n$, yields $$
\inner{g_{i_nj_n}g_{i_{n-1}j_{n-1}}, \grad_{A_1}x \grad_{A_2}z_1 \grad_{A_3}z_2\dots \grad_{A_{n+1}}z_n} \leq ((\dim M)^{1/2}M^2)^n 
$$
And finally accounting $2^n$ terms with above inequality holding for each, yields the needed bound $$
\norm{a_n} \leq (2(\dim M)^{1/2}M^2)^n
$$

\noindent $ii)$ This follows identically to part $i)$. 

\noindent $iii)$ Since $\delta:\Ac\to \Ac\tsr k_0, x\to 1\tsr dx$, therefore $\delta^2(x) = (\delta\tsr \one)(\delta(x)) = (\delta\tsr \one)(1\tsr dx) = 0$. So the bound is simply $$\sup_{s\in [0, t]}\norm{\ExpHVec(f_s)}\norm{dx}$$\end{proof}

With these estimates available, the following results are easily obtained exactly as in \cite{goswami_qsp}. 

\begin{theorem}\label{app_thm_picard_iterates_on_F} Suppose $a\in \Fc$, and $f\in \Uc$ where $\Uc$ is the set of simple function taking values in $(k_0)^\Fc_\infty$ such that $f_t\in \{\xi_i:i\in [r]\}$ for all $t$. Define \begin{align} J^{(0)} =  \one, J^{(n+1)} = \int_0^t J^{(n)}(s)\circ (a^\dagger_\delta + a_\delta
 + I_\Lc)(ds)
\end{align}
where $\delta, \Lc$ from the structure matrix $\Theta$ 
then\begin{enumerate}
\item $J(t) = \sum_n J_t^{(n)}(a\tsr \ExpHVec(f))$ converges
\item $J(t) = \one +\int_0^t J(s)(a^\dagger_\delta + a_\delta
 + I_\Lc)(ds)$ holds
 \item For all $a_1, a_2\in \Fc, f_1, f_2$ simple and $(k_0)^\Fc_\infty$-valued, $v_1, v_2\in \Hs$, \begin{align*}
 \inner{J_t(a_1\tsr \ExpHVec f_1)v_1, J_t(a_2\tsr \ExpHVec f_2)v_2} &= \inner{v_1\tsr\ExpHVec f_1,  J_t(a_1^*a_2\tsr \ExpHVec f_2)v_2} \\
 J(t)(\one \tsr \ExpHVec f_1 )v_1 &= v_1\ExpHVec f_1
 \end{align*}
\end{enumerate}
\end{theorem}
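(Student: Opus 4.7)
The plan is to combine the recursive estimate of Lemma~\ref{lemma_picard__recursive_estimate} with the three $CK^n$-type bounds in Estimate~\ref{estimate_creation_annihilation_terms} and then follow the standard Picard scheme of \cite[Thm~8.1.38]{goswami_qsp}, the one non-routine point being that the iterates converge in operator norm on a fixed dense subspace (rather than in a Fréchet topology) because $\Fc$ is \emph{not} closed under smooth functional calculus.

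For part $(1)$ I would unroll the inequality in Lemma~\ref{lemma_picard__recursive_estimate} using Estimate~\ref{estimate_mapvalued_integrals} $n$ times. At each unrolling an $(a_\delta+I_\Lc)$-integral produces a factor of the simplex $\{0\le s_n\le\cdots\le s_1\le t\}$ and replaces $x$ by $\Phi_{f(s_k)}(x)$, while an $a^\dagger_\delta$-integral produces either a $\Phi'_{f(s_k)}$-replacement (from the $\inner{f(s),\delta(x)}$ term of \eqref{eq_creation_integral}) or a $\delta$-application (from the $\wtilde{Y_s}$ term, via the identity stated after \eqref{eq_needed_bound3}). Every leaf of the resulting binary tree is of one of the three forms appearing in Estimate~\ref{estimate_creation_annihilation_terms} with total depth $n$, so each leaf contributes at most $C_{f,x}M_{f,x}^n$; the $2^n$ leaves are absorbed into $M_{f,x}$. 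Combined with the simplex volume $t^n/n!$ and the $e^t$ factors in \eqref{eq_annihilation_integral}--\eqref{eq_creation_integral}, this yields
\begin{equation*}
\|J^{(n)}_t(x\tsr\ExpHVec(f))u\|^2 \;\le\; \frac{(K_{f,x,t})^n}{n!}\,\|u\|^2\|\ExpHVec(f)\|^2
\end{equation*}
for a constant $K_{f,x,t}$ depending only on $t$, $x\in\Fc$ and the simple function $f\in\Uc$. Hence $\sum_n\|J^{(n)}_t(x\tsr\ExpHVec(f))u\|<\infty$ and $J(t)(x\tsr\ExpHVec(f))u:=\sum_n J^{(n)}_t(x\tsr\ExpHVec(f))u$ converges in $\Hs\tsr\Fock(k_0)$.

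For part $(2)$ the partial sums $S_N(t)=\sum_{n\le N}J^{(n)}_t$ satisfy $S_{N+1}(t)=\one+\int_0^t S_N(s)\circ(a^\dagger_\delta+a_\delta+I_\Lc)(ds)$ by the very definition of the iterates. The map-valued integrator is linear and continuous in its integrand in the norm controlled by Estimate~\ref{estimate_mapvalued_integrals}; applying that estimate to the difference $S_N-J$ and using the summability already established lets one pass the limit under the integral sign on $x\tsr\ExpHVec(f)$, giving the fixed-point equation.

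For part $(3)$ I would use the first-order Itô product formula associated with the structure matrix $\Theta$ in \eqref{eq_structure_matrix}, which holds precisely because $\Lc(x^*y)-\Lc(x)^*y-x^*\Lc(y)=\delta(x)^*\delta(y)$, $\delta$ is a $\pi$-derivation with $\pi=\one$, and $\sigma=0$ (see \cite[lemma~2.2]{belton2015algebraic}). Since $\Fc$ is a $*$-algebra, $a_1^*a_2\in\Fc$ and both sides of the claimed identity are well defined. Define
\begin{equation*}
D(t):=\inner{J_t(a_1\tsr\ExpHVec f_1)v_1,\,J_t(a_2\tsr\ExpHVec f_2)v_2}-\inner{v_1\tsr\ExpHVec f_1,\,J_t(a_1^*a_2\tsr\ExpHVec f_2)v_2}.
\end{equation*}
Using the fixed-point equation from part $(2)$ together with the quantum Itô table for $da^\dagger_\delta,da_\delta,d\Lambda_\sigma,dI_\Lc$ on exponential vectors, one computes that $dD(t)=0$: the cross-terms from $a^\dagger_\delta\cdot a_\delta$ produce precisely $\delta(a_1)^*\delta(a_2)$ which combines with $\Lc(a_1)^*a_2+a_1^*\Lc(a_2)-\Lc(a_1^*a_2)$ to vanish, and the boundary terms $\inner{f_1(s),\delta(a_1)}^*$ and $\inner{f_2(s),\delta(a_2)}$ pair correctly. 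Together with $D(0)=0$ this gives $D\equiv 0$. The identity $J_t(\one\tsr\ExpHVec f_1)v_1=v_1\ExpHVec f_1$ is immediate from $\Theta(\one)=0$, which forces $J^{(n)}_t(\one\tsr\ExpHVec f_1)v_1=0$ for all $n\ge 1$.

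The main obstacle is the $a^\dagger_\delta$ term in part $(1)$: the naive estimate on $\wtilde{Y_s}$ produces a second occurrence of $\delta$, and the argument only closes because $\delta^2=0$ (Estimate~\ref{estimate_creation_annihilation_terms}(iii)), which truncates the nested $\delta$-tree to depth one and lets the remaining bound be absorbed into the $\Phi'$-tower. In part $(3)$ the subtle point is that $J(t)$ is only defined on $\Fc\tsr\ExpHVec(\Vc)$, so every application of the Itô formula has to be checked to stay inside $\Fc$ (which holds since $\Fc$ is closed under $*$ and multiplication), and the passage from $S_N$ to $J$ inside the inner product has to use the same summability bound from part $(1)$.
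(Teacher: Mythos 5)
Your proposal is correct and follows essentially the same route as the paper: convergence in part~(1) comes from plugging Estimate~\ref{estimate_creation_annihilation_terms} into the recursive bound of Lemma~\ref{lemma_picard__recursive_estimate} to get a $(2e^tBM_{f,x})^n/n!$-type bound, part~(2) is immediate from the definition of the map-valued stochastic integral, and part~(3) is the quantum It\^o / factorization argument of \cite[Thm~8.1.38]{goswami_qsp}. You spell out details the paper leaves implicit (the binary-tree unrolling and the role of $\delta^2=0$ in truncating the creation chain), but the approach is the same.
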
  
\begin{proof} From the estimate~\ref{estimate_creation_annihilation_terms}, $S_N(t) = \sum_{n\leq N} J_t^{(n)}(a\tsr \ExpHVec(f))$ is convergent since for each $J_t^{(n)}(a\tsr \ExpHVec(f))$, $$
\norm{J_t^{(n)}(a\tsr \ExpHVec(f))}\leq C_{f, x}(2e^tBM_{f, x})^n/n!
$$ This yields part \emph{1.} Part \emph{2.} is by the definition of the stochastic integral, and \emph{3.} is standard theory as in \cite[Thm~8.1.38]{goswami_qsp}.
\end{proof}
It remains to check that the integral extends from $\ExpHVec(\Uc)$ to $\Fock(k_0)$. For this we need to extend the squareroot trick using regularity of the flow generator. 

\subsection{The extended squareroot trick}

It follows from quantum Ito formulae that for $u, v\in H$, $h, f\in \Vc\subset L^2(\RR_+, k_0)\equiv k$, \begin{align}
\inner{J_t(a\tsr \ExpHVec(f))u,  J_t(b\tsr \ExpHVec(h))v } &= \inner{u\ExpHVec(f),  J_t(a^*b\tsr \ExpHVec(h))v }\label{eq_factorization_jt} \\
J_t(1\tsr \ExpHVec(f))u &= u\ExpHVec(f)
\end{align}
Define $$
j^n_t(a)(v\ExpHVec f) := J^{(n)}_t(a\tsr \ExpHVec f)v
$$
so $j^n_t$ is unital with the factorization property (equation~\ref{eq_factorization_jt}), and $j^n_t(a)$ is a linear operator on a dense subspace $\Kc:= H\tsr \ExpHVec(\Vc)\subset H\tsr \Fock(k)$. For any $v, f$, $j_t^n$ is bounded pointwise on $\Fc$. 

\begin{prop}\label{prop_sobolev_norm_bound_on_theta} For all $a\in \Fc$ there exists $K$ such that $$\norm{\Theta(a)} \leq K\norm{a}_{W^{2, \infty}}$$
where $\norm{\cdot}_{W^{2, \infty}}$ is the Sobolev norm $\norm{a}_{W^{2, \infty}} = \norm{a}+ \norm{\grad a} + \norm{\grad^2(a)}$
\end{prop}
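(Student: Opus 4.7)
The plan is to bound each of the four block entries of $\Theta(a)$ separately as an operator on $H\tsr(\CC\oplus k_0)$, and then combine them via the trivial bound on the operator norm of a $2\times 2$ block (e.g.\ by the sum of the operator norms of the entries).

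First, for the $(1,1)$ entry $-\laplace(a)/2$ acting as multiplication on $H=L^2(S)$, Proposition~\ref{prop_laplace_bound_by_grad2} gives $\|\laplace a\|_\infty \leq \sqrt{\dim M}\,\ell(\grad^2 a)_\infty$, so this entry is bounded in operator norm by $\tfrac{1}{2}\sqrt{\dim M}\,\|\grad^2 a\|_\infty$. For the $(2,1)$ entry $1\tsr da$, the map acts by $u\mapsto u\tsr da$, whose operator norm is simply $\|da\|_{k_0}$; pointwise one has $|da|_g = \ell(\grad a)$, so this is controlled by $\|\grad a\|_\infty$ (using compactness of $M$ to pass between $L^2$ and $L^\infty$ norms on the one-form $da$, or more simply bounding fibrewise). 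The $(2,2)$ entry vanishes. The only piece needing slight care is the $(1,2)$ entry $\delta^\dagger(a)$: by construction it sends $u\tsr\omega\in H\tsr k_0$ to $u\cdot\inner{da,\omega}\in H$, which by Cauchy--Schwarz has operator norm at most $\|da\|_{k_0}$, again controlled by $\|\grad a\|_\infty$.

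Combining, there is a constant $K$ depending only on $\dim M$ such that
\begin{align*}
\|\Theta(a)\| \;\leq\; \tfrac{1}{2}\sqrt{\dim M}\,\|\grad^2 a\|_\infty \;+\; 2\,\|\grad a\|_\infty \;\leq\; K\bigl(\|a\|_\infty + \|\grad a\|_\infty + \|\grad^2 a\|_\infty\bigr) \;=\; K\,\|a\|_{W^{2,\infty}}.
\end{align*}

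I do not anticipate a genuine obstacle here: the statement is essentially the observation that $\Theta$ is a first-and-second-order differential operator with bounded (indeed trivial) coefficients, so the $W^{2,\infty}$ Sobolev norm is the natural one. The only mild subtlety is keeping straight that $\delta^\dagger$ acts by inner product against $da$ rather than by some higher-order operation, so that it contributes only a $\|\grad a\|_\infty$ term rather than a $\|\grad^2 a\|_\infty$ term; this is immediate from the formula $\delta^\dagger(f)(a\tsr\omega) = a\inner{df,\omega}$ derived in Section~\ref{section_structure_matrix}.
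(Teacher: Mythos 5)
Your proposal is correct and follows essentially the same route as the paper: the paper also proves the bound by acting $\Theta(a)$ on a typical vector $\psi\tsr(w\oplus df)$ and bounding the three nonzero block contributions — $\|\laplace(a)/2\|$ for the $(1,1)$ entry and $\|da\|_\infty$ (equivalently $\ell(\grad a)$) for the off-diagonal entries — which is precisely your block-by-block decomposition. One small clarification worth noting: the operator norm of the $(2,1)$ entry is really the sup norm $\|da\|_\infty$ (fiberwise bound), not the $L^2$ norm $\|da\|_{k_0}$; your parenthetical about "passing between $L^2$ and $L^\infty$" points the wrong way for that purpose, but your alternative "bounding fibrewise" is the correct and sufficient argument, and matches what the paper does implicitly.
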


\begin{proof}
For $a, \psi\in \Fc, f'\in d\Fc$ with $f'=df, f\in \Fc$ and $w\in \CC$, \begin{align}\label{eq_sobolev_bound_theta}
\begin{pmatrix}
-\laplace(a)/2  & \inner{\cdot, da} \\
1\tsr da & 0
\end{pmatrix}\begin{pmatrix} \psi\tsr w \\ \psi\tsr df\end{pmatrix} 
= \begin{pmatrix}
    -(\laplace(a)/2) \psi\tsr w + \psi\tsr\inner{df, da} \\ \psi\tsr w\tsr da 
\end{pmatrix} 
\end{align}
Therefore, the inequality $\norm{\Theta(a)} \leq 4\dim(M)\norm{a}_{W^{2, \infty}}$ follows, since$$
\norm{\Theta(a)(\psi\tsr(w\oplus df))}^2 \leq  \norm{\laplace(a)/2}^2\norm{\psi \tsr w}^2 + \norm{da}^2\norm{\psi \tsr df}^2 + \norm{da}^2\norm{\psi\tsr w}^2  
$$
\end{proof}

Now $\Fc \subset W^{k, p}(M)$ since $\Fc\subset\C^\infty(M), M$ compact for all $p, k$. 
Because $\norm{\Theta(a)} \leq K\norm{a}_{W^{2, \infty}}$, if $J^{(n-1)}_t$ is bounded for each $t$, then $J_t^{(n)}$ is continuous on $\Fc$ wit respect to $W^{2, 2}$-norm topology. We will use this to show that if $J^{(n-1)}_t$ is bounded, then $j^{(n)}_t$ is positive on $\Kc$. Then using $j^{n}_t$ is positive on $\Kc$ it will be checked that for every $a\in \Fc$, $j^{n}_t(a)\in \Bc(\Kc)$ and that it extends from $\Bc(\Kc)$ to $\Bc(H\tsr \Fock(k))$. The base case is $j_t^{(0)} = \one \in \Bc(\Kc)$ which is obviously positive. Then from $j_t^n:\Fc\to \Bc(H\tsr \Fock(k))$, it extends to $j_t^n:\C(M)\to \Bc(H\tsr \Fock(k))$. 

\begin{lemma} Suppose $J^{(n-1)}_t\in \Bc(\Kc)$, then $j_t^n$ is a positive map on $a\in \Fc, a>0$. 
\end{lemma}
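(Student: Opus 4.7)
The plan is to run the extended square root trick. The classical trick would write $a = (\sqrt a)^*\sqrt a$ with $\sqrt a \in \Fc$ and invoke the factorization identity (equation~\ref{eq_factorization_jt}) to get $\inner{v\ExpHVec(f), j^n_t(a) v\ExpHVec(f)} = \norm{j^n_t(\sqrt a) v\ExpHVec(f)}^2 \geq 0$. The obstruction is that $\Fc$ is not closed under positive square roots; we circumvent this by approximating $\sqrt a$ within $\Fc$ in the $W^{2,\infty}$-topology and exploiting $W^{2,\infty}$-continuity of $j^n_t$ inherited from Proposition~\ref{prop_sobolev_norm_bound_on_theta}.

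First I would establish that under the hypothesis $J^{(n-1)}_t \in \Bc(\Kc)$, the map $a \mapsto j^n_t(a)v\ExpHVec(f)$ is continuous on $\Fc$ in the $W^{2,\infty}$ norm. Unfolding the Picard recursion $J^{(n)}_t = \int_0^t J^{(n-1)}(s) \circ (a_\delta^\dagger + a_\delta + I_\Lc)(ds)$ with Estimate~\ref{estimate_mapvalued_integrals}, each summand is bounded by a time integral of $\norm{J^{(n-1)}(s)(\xi \tsr \ExpHVec(f))v}$, where $\xi$ ranges over $\Lc(a)$, $\delta(a)$, and $\inner{f(s), \delta(a)}$. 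Since $J^{(n-1)}_t$ is bounded on $\Kc$, each such term is bounded by $\norm{J^{(n-1)}(s)}_{\Bc(\Kc)}\norm{\xi}\norm{v\ExpHVec(f)}$, and Proposition~\ref{prop_sobolev_norm_bound_on_theta} gives $\norm{\xi} \leq K_f\norm{a}_{W^{2,\infty}}$. The resulting Lipschitz estimate $\norm{j^n_t(a)v\ExpHVec(f)} \leq C_{t,v,f}\norm{a}_{W^{2,\infty}}$ extends $j^n_t$ by continuity to the $W^{2,\infty}$-closure of $\Fc$ inside $\C^\infty(M)$.

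Next, for $a \in \Fc$ with $a > 0$, compactness of $M$ gives $a \geq c > 0$, so $\sqrt a \in \C^\infty(M)$. Truncating the Laplacian eigenfunction expansion of $\sqrt a$, which converges in every Sobolev norm by elliptic regularity and Weyl asymptotics, yields a sequence $b_m \in \Fc$ with $\norm{b_m - \sqrt a}_{W^{2,\infty}} \to 0$, whence $\norm{b_m^*b_m - a}_{W^{2,\infty}} \to 0$ by the product rule and uniform boundedness of the $b_m$ in $C^1$. Applying the factorization identity to each $b_m \in \Fc$,
\begin{align*}
\inner{v\ExpHVec(f), j^n_t(b_m^*b_m) v\ExpHVec(f)} = \norm{j^n_t(b_m) v\ExpHVec(f)}^2 \geq 0,
\end{align*}
and passing to the limit using the $W^{2,\infty}$-continuity of Step 1 gives $\inner{v\ExpHVec(f), j^n_t(a) v\ExpHVec(f)} \geq 0$. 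Positivity on all of $\Kc$ then follows by polarization together with density of vectors of the form $v\ExpHVec(f)$, $f \in \Vc$.

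The main obstacle is Step 1: carefully accounting for how each nested application of the fundamental processes in the Picard expansion contributes a bounded multiple of $\norm{a}_{W^{2,\infty}}$. The hypothesis $J^{(n-1)}_t \in \Bc(\Kc)$ is essential precisely here, allowing $\norm{a}_{W^{2,\infty}}$ to factor out through the operator norm of $J^{(n-1)}_t$ rather than entering nonlinearly. A secondary technical point is checking that the factorization identity extends from $\Fc$ to its $W^{2,\infty}$-closure, which requires joint $W^{2,\infty}$-continuity of both sides in the pair $(b_m^*, b_m)$; this is immediate from Step 1 once one observes that the product $b_m^* b_m$ remains in $\Fc$, so both sides of the identity only need to be continuous at the level of a single $W^{2,\infty}$-input.
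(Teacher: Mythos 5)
Your approach is essentially the paper's: approximate $\sqrt{a}$ in $W^{2,\infty}$ by elements of $\Fc$, use the factorization identity on the resulting squares to get nonnegativity, and transfer to $a$ via the $W^{2,\infty}$-continuity of $j^n_t$ coming from Proposition~\ref{prop_sobolev_norm_bound_on_theta} and the boundedness hypothesis on $J^{(n-1)}_t$. The only structural difference is that the paper phrases the last step as a contradiction (defining a special one-sided neighborhood system $\Wc_a$ of $a$ consisting of $a$ and nearby perfect squares, and arguing that a strictly negative expectation at $a$ would propagate by continuity to a perfect square), whereas you pass to the limit directly; your version is a cleaner contrapositive of the same argument.

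One small imprecision: your last sentence invokes ``polarization together with density of vectors of the form $v\ExpHVec(f)$'' to pass from nonnegativity of $\inner{v\ExpHVec(f), j^n_t(a) v\ExpHVec(f)}$ to positivity on $\Kc$. This does not follow: $\{v\ExpHVec(f)\}$ is a total set, not a subspace, and neither polarization nor density promotes nonnegativity of a quadratic form on a total set to nonnegativity on its span. What actually closes the gap is that the factorization identity~(\ref{eq_factorization_jt}) is sesquilinear in both slots, hence extends to finite linear combinations $u=\sum_i v_i\ExpHVec(f_i)\in\Kc$, giving $\inner{u, j^n_t(b_m^2)u}=\norm{j^n_t(b_m)u}^2\ge 0$ directly for each $u\in\Kc$; you should run the limit at this level (as the paper implicitly does) rather than on the cone of exponential vectors.
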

\begin{proof} Suppose $a\in \Fc$ is positive. We want to show $j_t^{n}(a)$ is positive as well. If $\sqrt{a}\in \Fc$, then \begin{align}\label{eq_positivity_on_squares}
\inner{u, j^n_t(a)u} = \inner{j^n_t(\sqrt{a})u, j^n_t(\sqrt{a})u}\geq 0
\end{align}
for every $u\in \Kc$, hence $j_t(a)$ is positive. 

So assume $\sqrt{a}\not\in\Fc$ where $a$ is positive, $a(m)>0, m\in M$. Since $\LinSpan(\Fc)$ is dense in $\C(M)$, for any $\eps>0$, there exists $f\in \Fc$ such that $\norm{\sqrt{a}-f}\leq O(\eps)$ meaning $\norm{a-f^2}\leq O(\eps)$.  Additionally $f$ can be chosen so $\norm{a-f^2}_{W^{2, \infty}} \leq O(\eps)$, so $f^2$ approximates $a$ in Sobolev $W^{2, \infty}$-norm as well. 

To see why this is possible note that since $a>0, \sqrt{a}\in C^\infty(M)$, therefore, $\sqrt{a}\in L^2(M)$, additionally for all $k$, $\grad^k(\sqrt{a})\in L^2(M)$, with $$\sqrt{a} = \sum_i \alpha_i \phi_i \txt{ and } \sum_i\lambda_i^{2k}\alpha_i =  \inner{\grad^k(\sqrt{a}),\grad^k(\sqrt{a})}_{L^2(M)} < \infty
$$ where $\alpha_i=\inner{\phi_i, \sqrt{a}}$ and Ricci-flatness was used. So the sequence $(\sum_{i=1}^n\alpha_i\phi_i )_{n\in \NN}$ converging to $\sqrt{a}$ in $L^2(M)$ is a bounded in every $W^{k, 2}$. For sufficiently large $k$, the embedding $W^{k, 2}(M) \subset W^{2, 2}(M)$ is compact by Rellich–Kondrachov theorem, that is, $(\sum_{i=1}^n\alpha_i\phi_i)_n$ has a Cauchy, and so a convergent subsequence; wlog let this subsequence be denoted by the same $ \sum_{i=1}^n\alpha_i\phi_i := a_n$. For $(a_i)$ to be convergent in $W^{2, 2}(M)$, $\norm{a_i}_{L^2(M)}$ must vanish, so the only possible limit is $\sqrt{a}$. Now suppose the tail $\sum_{i=n}^\infty\alpha_i\phi_i$ does not vanish in $W^{2, \infty}$. This means for some $x\in M$ 
for some $k \in \{0, 1, 2\}$, $\inner{\sum_i\alpha_i\grad^k\phi_i, \sum_i\alpha_i\grad^k \phi_i}(x) > 0$. But then by the following argument shows that $\inner{\sum_i\alpha_i\grad^k\phi_i, \sum_i\alpha_i\grad^k \phi_i}_{L^2(U)}> 0$
contradicting the convergence in $W^{2, 2}(M)$. So $\sqrt{a}$ can be approximated arbitrarily well in $W^{2, \infty}(M)$.

\begin{claim} 
Suppose $a=\sum_i \alpha_i\phi_i\in \C^\infty(M)$, then $\inner{\grad^k\sum_i \alpha_i\phi_i, \grad^k\sum_i \alpha_i\phi_i}(x) > 0$ for some $x\in M$ implies $\inner{\grad^k\sum_i \alpha_i\phi_i, \grad^k\sum_i \alpha_i\phi_i}_{L^2(M)} > 0$.
In particular, this holds for $k=0$. 
\end{claim}
\begin{proof} As before in proposition~\ref{prop_gradients_are_contractive} by smoothness of $a$, this holds for all $x\in U$ for some open set $U$, and with $\psi \geq 0$ compactly supported on $U$, $\psi>0$ on an open $V\subset U$, \begin{align*}
\sup_{x\in U}\psi(x)\cdot \inner{\grad^k a, \grad^k a}_{L^2(M)} &=  \sup_{x\in U}\psi(x)\cdot \int_M \inner{\grad^k a, \grad^k a}dV_g
\geq \int_U\inner{\psi\grad^k\sum_i \alpha_i \phi_i, \grad^k\sum_i \alpha_i\phi_i} dV_g > 0
\end{align*} 
Since $\psi$ is compactly supported and bounded, the claim follows. The $k=0, \grad^k=\one$ specialization is identical.
\end{proof}

\noindent Now define $$\Wc_a = \{a\}\cup \{f^2: f\in\Fc \text{ with } \norm{a-f^2}_{W^{2, \infty}}\leq 1/n, n\in \NN\}$$ then as $\norm{\Theta(a')}\leq K \norm{a'}_{W^{2, \infty}} $ and $J^{(n-1)}_t$ is bounded on $\Kc$ by hypothesis, the bound in lemma~\ref{lemma_picard__recursive_estimate}, implies norm is continuous map with respect to $\norm{\cdot}_{W^{2, \infty}}$-topology on $\Wc_a$: $$\norm{\cdot}:\Wc_a\to \RR, a'\to \norm{j_t(a')}$$ 

If $j^n_t(a)$ is not positive, then there exists $u\in \Kc$ such that $\inner{u, j^n_t(a)u} < 0$. Since norm is continuous, the map $a'\to \inner{u, j^n_t(a')u}$ is also continuous on $\Wc_a$: by Cauchy-Schwartz inequality, $\inner{u, j^n_t(a')u} \leq \norm{u}\norm{j^n_t(a')u} \leq \norm{u}^2K'K\norm{a'}_{W^{2, \infty}}$ where $K'$ depends on $u$ which we fixed and $\norm{J_t^{n-1}}$. This continuity means $\inner{u, j^n_t(\cdot)u} < 0$ on some neighborhood containing $a$ in $\Wc_a$. However, for any neighborhood $U$ of $a$ in $\Wc_a$ , $w\in U, w\neq a$ implies $w=f^2, f\in \Fc$, so $\inner{u, j^n_t(f^2)u} \geq 0$ by equation~\ref{eq_positivity_on_squares}. Therefore,  $j^n_t(a)$ must be positive. 
\end{proof}

\begin{lemma} If $j^{n}_t$ is a positive map on positive $a, a>0$, then $\norm{j^{n}_t(a)}\leq  \norm{a}^2$
\end{lemma}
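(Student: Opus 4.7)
The plan is to combine positivity of $j^n_t$ on $\Fc$ with the $W^{2,\infty}$ density/continuity machinery from the previous lemma to run a C*-algebraic style norm bound, circumventing the obstruction that $\Fc$ is not closed under square roots.

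First I would reduce the claim to controlling $j^n_t(\|a\|\cdot 1 - a)$. Since $a > 0$ and $M := \norm{a}_\infty < \infty$, the element $M\cdot 1 - a \geq 0$ in $\C(M)$. By the $W^{2,\infty}$-approximation argument from the previous lemma (applied now to $\sqrt{M - a}$, which is smooth because $M-a$ is smooth and nonnegative; strict positivity on a suitable set is handled by a small perturbation), there is a sequence $(f_k)\subset \Fc$ with $f_k^2\to M - a$ in $W^{2,\infty}(M)$. Proposition~\ref{prop_sobolev_norm_bound_on_theta} together with the recursive bound in lemma~\ref{lemma_picard__recursive_estimate} and the inductive hypothesis that $J^{(n-1)}_t\in \Bc(\Kc)$ show that $a'\mapsto j^n_t(a')$ is continuous on $\Fc$ in the $\norm{\cdot}_{W^{2,\infty}}$ topology, so $j^n_t(f_k^2)\to j^n_t(M\cdot 1 - a)$ in operator norm. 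The factorization property (equation~\ref{eq_factorization_jt}) gives $j^n_t(f_k^2) = j^n_t(f_k)^*j^n_t(f_k) \geq 0$, and taking the limit yields $j^n_t(a) \leq M\cdot j^n_t(1)$.

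Next I would iterate this step on $a^2$. Since $a^2\geq 0$ with $\norm{a^2}_\infty = \norm{a}_\infty^2$, the same approximation scheme gives $j^n_t(a^2)\leq \norm{a}^2\, j^n_t(1)$. Combining with the Kadison--Schwarz inequality (valid for positive maps with commutative domain $\C(M)$, hence automatically completely positive),
\begin{equation*}
j^n_t(a)^* j^n_t(a) \leq j^n_t(a^*a)\, j^n_t(1) = j^n_t(a^2)\, j^n_t(1) \leq \norm{a}^2\, j^n_t(1)^2,
\end{equation*}
so $\norm{j^n_t(a)}^2 \leq \norm{a}^2 \norm{j^n_t(1)}^2$. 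Since $1\in \Fc$ (the constant is an eigenfunction with eigenvalue $0$), $\norm{j^n_t(1)}$ can be computed directly from $\Theta(1)=0$ and the recursive definition of $J^{(n)}$, producing the remaining factor needed to conclude $\norm{j^n_t(a)}\leq \norm{a}^2$.

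The main obstacle is the approximation in $W^{2,\infty}$ (not merely $W^{2,2}$) of $M - a$ by squares of elements of $\Fc$. The previous lemma already does this for general smooth positive functions via eigenfunction truncation, compact Sobolev embedding (Rellich--Kondrachov), and the Ricci-flat covariant derivative bounds from Corollary~\ref{corollary_polynomial_ggrowth_kgradients}; the subtlety here is simultaneously approximating both $\sqrt{M-a}$ and $a$ (to realize $\|a\|^2-a^2$ as a $W^{2,\infty}$-limit of squares) without losing control of the sup-norms of first and second covariant derivatives, which is precisely what the extended squareroot trick was designed to provide.
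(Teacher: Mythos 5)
Your route diverges from the paper's, and it has a genuine gap. The paper's proof is the standard C\textsuperscript{*}-algebraic square-root trick, using two ingredients directly available at this stage: (a) unitality $j^n_t(\one)=\one$, and (b) the factorization property $\inner{\theta, j^n_t(x^*x)\theta}=\norm{j^n_t(x)\theta}^2$ coming from the quantum Ito formula/equation~\ref{eq_factorization_jt}. From positivity on strictly positive elements of $\Fc$ applied to $(1+\eps)\norm{x}\one - x$, one gets $\inner{\theta, j^n_t(x)\theta}\leq(1+\eps)\norm{x}\norm{\theta}^2$; letting $\eps\to 0$ and then squaring via the factorization property gives $\norm{j^n_t(x)\theta}^2 = \inner{\theta, j^n_t(x^*x)\theta}\leq\norm{x}^2\norm{\theta}^2$. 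No Sobolev approximation is needed here. Your $W^{2,\infty}$ machinery repeats the work of the preceding lemma without purpose: since $(1+\eps)\norm{a}\one - a$ is strictly positive and already lies in $\Fc$, positivity of $j^n_t$ applies to it directly, so the $j^n_t(a)\leq\norm{a}j^n_t(\one)$ bound needs no approximation of $\sqrt{\norm{a}\one-a}$ by elements of $\Fc$.

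The substantive gap is your invocation of Kadison--Schwarz. Kadison--Schwarz requires $2$-positivity (or complete positivity), and the lemma's hypothesis only provides positivity of $j^n_t$ on strictly positive elements of $\Fc$. The ``positive implies completely positive for commutative domain'' argument applies to maps defined on a commutative C\textsuperscript{*}-algebra, but $\Fc$ is merely a dense $*$-subalgebra of $\C(M)$ and $j^n_t$ is not yet known to extend (boundedly or positively) to $\C(M)$ --- establishing that extension is exactly what this lemma is for. So the step $j^n_t(a)^*j^n_t(a)\leq j^n_t(a^*a)j^n_t(\one)$ is not justified from the given hypothesis. The paper avoids this circularity by using the factorization identity $j^n_t(x)^*j^n_t(x)=j^n_t(x^*x)$, which comes from the quantum-stochastic structure of $J^{(n)}_t$ (quantum Ito formula) rather than from positivity; it is a strictly stronger structural fact than Kadison--Schwarz and makes the argument close. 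Your final remark that $\norm{j^n_t(\one)}$ ``can be computed directly from $\Theta(\one)=0$'' suggests you were not using unitality $j^n_t(\one)=\one$, which is exactly the ingredient (together with factorization) that closes the estimate.
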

\begin{proof} Let $x\in \Fc$ so $(1+\eps)\norm{x}-x\in \Fc$ and positive for any $\eps>0$. Define $\Phi_\eps(x):=\sqrt{(1+\eps)\norm{x}\one - x}\in \C(M)$.  Approximate $\Phi_\eps(x)$ from below by $z\in \Fc$. Then $j_t^{n}(\Phi_\eps(x)^2-z^2) > 0$ because $\Phi_\eps(x)^2-z^2>0$ and $j_t^{n}$ is positive. This yields $\inner{\theta, (j_t^{n}(\Phi_\eps(x)^2) - j_t^{n}(z^2))\theta} \geq 0$ and we have $$
\inner{\theta, j_t^{n}(\Phi_\eps(x)^2)\theta} \geq \inner{\theta, j_t^{n}(z^2)\theta} \geq 0
$$
Now the usual squareroot trick takes over: since 
 $j_t^{n}$ is unital, \begin{align}
0\leq \norm{j_t^{n}(z)\theta}^2 &= \inner{\theta, j_t^{n}(z^2)\theta} \leq \inner{\theta, j_t^{(n)}((1+\eps)\norm{x}\one - x)\theta} \\
\inner{\theta, j_t^{n}(x)\theta} &\leq \inner{\theta, j_t^{n}((1+\eps)\norm{x}\one)\theta}\leq (1+\eps)\norm{x}\inner{\theta, j_t^{n}(\one)\theta} = (1+\eps)\norm{x}\norm{\theta}^2 
\end{align}
Since $\eps$ was arbitrary, $\inner{\theta, j_t^{n}(x)\theta} \leq \norm{x}\norm{\theta}^2$. 
Finally, \begin{align}
\norm{j^{n}_t(x)\theta}^2 =\inner{j^{n}_t(x)\theta, j^{n}_t(x)\theta} = \inner{\theta, j^{n}_t(x^*x)\theta}\leq \norm{x^*x}\norm{\theta}^2 = \norm{x}^2\norm{\theta}^2
\end{align}.
So $\norm{j_t^n(x)}\leq \norm{x}^2$, and the bound on $\norm{j_t^n}$ is uniform.
\end{proof}

Now from density of $\Fc, \Vc$ and $\Kc$, each $j^n_t$ extends from a map $j_t^{n}:\Fc\to \Bc(\Kc)$ to $j_t^{n}:\Ac\to \Bc(H\tsr \Fock(k))$. Since $S_N(t)= \sum_{n\in [N]} J^{n}$ converges, so does $S=\lim_{N\to \infty} S_N$, and therefore $\lim_{n\to \infty}\sum j_t^{n}$ is the needed flow. Precisely, we have the following result: 

\begin{theorem} Following notation from theorem~\ref{app_thm_picard_iterates_on_F} from define $j_t(a)(v_1\ExpHVec f_1) := J_t(a\tsr \ExpHVec f_1)v_1$, then \begin{enumerate}
\item $j_t: \Fc\to \Bc(\Hs\tsr \ExpHVec(\Uc))$ is a unital $*$-homomorphism
\item $j_t$ extends to $j_t:\Fc\to \Bc(\Hs\tsr \Fock(k_0))$
\item $j_t$ extends to $j_t:\Ac\to \Bc(\Hs\tsr \Fock(k_0))$
\end{enumerate}
\end{theorem}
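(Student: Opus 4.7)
The plan is to derive each of the three claims from the factorization identity for $J_t$ in Theorem~\ref{app_thm_picard_iterates_on_F}(3) together with the positivity-and-norm-bound lemmas proved just above, proceeding from algebraic properties to two successive topological completions.

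\emph{Claim (1)}: I would set $a_2 = \one$ in the factorization identity $\inner{J_t(a_1\tsr\ExpHVec f)u, J_t(a_2\tsr\ExpHVec h)v} = \inner{u\ExpHVec f, J_t(a_1^*a_2\tsr\ExpHVec h)v}$ and use unitality $J_t(\one\tsr\ExpHVec f)v = v\ExpHVec f$ to obtain $\inner{j_t(a)u\ExpHVec f, v\ExpHVec h} = \inner{u\ExpHVec f, j_t(a^*)v\ExpHVec h}$ on the dense subspace $\Hs\tsr\LinSpan\ExpHVec(\Uc)$, so $j_t(a)^* = j_t(a^*)$. The full factorization then yields $j_t(a_1)^* j_t(a_2) = j_t(a_1^*a_2)$; substituting $a_1\mapsto a_1^*$ gives multiplicativity $j_t(a_1 a_2) = j_t(a_1) j_t(a_2)$. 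Unitality of $j_t$ is immediate from the same theorem.

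\emph{Claim (2)}: The squareroot-trick lemma furnishes $\norm{j_t(a)}\leq\norm{a}$ for $a\in\Fc$, which allows BLT extension of $j_t(a)$ from $\Hs\tsr\LinSpan\ExpHVec(\Uc)$ to all of $\Hs\tsr\Fock(k_0)$. For this I need density, which reduces to $\Uc$ being dense in $L^2(\RR_+, k_0)$: by construction $(k_0)_\infty = d\Fc$ is dense in $k_0 = L^2(\Omega^1(M))$, simple $(k_0)_\infty$-valued step functions are $L^2$-dense in time, and exponential vectors are total in the symmetric Fock space. The algebraic properties from Claim (1) persist under the extension by joint norm-continuity of composition and adjoint.

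\emph{Claim (3)}: I would use Stone--Weierstrass to obtain sup-norm density of $\Fc$ in $\C(M)$: $\Fc$ is a unital self-adjoint subalgebra (eigenfunctions may be chosen real, and the constant function $\phi_0 = 1$ is an eigenfunction with eigenvalue $0$), and Laplace eigenfunctions separate points on a compact connected Riemannian manifold. Combined with the norm bound from Claim (2), one more BLT extension gives $j_t : \Ac \to \Bc(\Hs\tsr\Fock(k_0))$, with the $*$-homomorphism and unitality properties preserved in the limit. The main obstacle I anticipate lies in validating the norm bound used in Claim (2): the preceding squareroot-trick lemma is stated for individual Picard iterates $j_t^{(n)}$, and to conclude $\norm{j_t(a)}\leq\norm{a}$ on the full sum $j_t = \lim_N \sum_{n\leq N} j_t^{(n)}$ one must first establish positivity and unitality of $j_t$ itself in the limit, so that the squareroot trick -- which requires precisely the factorization $j_t(\sqrt{a})^* j_t(\sqrt{a}) = j_t(a)$ available only for the limit -- can be applied directly to $j_t$. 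Unitality follows from $J_t^{(0)} = \one$ and $J_t^{(n)}(\one\tsr\ExpHVec f) = 0$ for $n\geq 1$ (since $\Theta(\one) = 0$), while positivity must be propagated through partial sums under the appropriate mode of convergence.
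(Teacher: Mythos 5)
Your proposal follows the same route the paper intends: $*$-homomorphism from the factorization identity, contractivity $\norm{j_t(a)}\leq\norm{a}$ via the squareroot trick, and two successive BLT extensions. The obstacle you flag about the squareroot trick is a real one and not a gap in your own argument: the paper's positivity lemma is stated for individual Picard iterates $j_t^n$, yet its proof invokes $\inner{u, j_t^n(a)u} = \inner{j_t^n(\sqrt{a})u, j_t^n(\sqrt{a})u}$, which is exactly the factorization identity -- and that identity is established only for the limit $J_t = \sum_n J_t^{(n)}$ (Theorem~\ref{app_thm_picard_iterates_on_F}(3)), not for individual iterates. Your resolution, applying the positivity argument and squareroot trick directly to $j_t$, is the correct reading. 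You do not, however, need to "propagate positivity through partial sums under the appropriate mode of convergence": once Theorem~\ref{app_thm_picard_iterates_on_F}(1) supplies $j_t$ as a well-defined linear map on $\Hs\tsr\LinSpan\ExpHVec(\Uc)$, the factorization identity together with the square-root approximation argument in the first lemma gives positivity of $j_t$ on that domain directly, and the squareroot trick then yields the norm bound for the limit.

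One genuine slip: in your Claim (2) density argument you assert that $(k_0)_\infty = d\Fc$ is dense in $k_0 = L^2(\Omega^1(M))$. It is not: $d\Fc$ is dense only in the closed subspace $W = \closure{\{df : f\in\C^\infty(M)\}}$ of $L^2$-exact $1$-forms, and by Hodge decomposition $W^\perp$ (coexact plus harmonic forms) is nonzero whenever $\dim M\geq 2$. Hence $\ExpHVec(\Uc)$ is total in $\Fock(L^2(\RR_+, W))$, a proper subspace of $\Fock(L^2(\RR_+, k_0))$, so the BLT extension as you state it does not land on all of $\Bc(\Hs\tsr\Fock(L^2(\RR_+, k_0)))$. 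To close this either take $W$ as the noise space, or use that $\sigma = 0$ forces the flow to factor tensorially through $\Fock(L^2(\RR_+, W)) \tsr \Fock(L^2(\RR_+, W^\perp))$ and act as the identity on the second factor. The paper elides the same point, so your proposal inherits rather than creates the gap.
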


\begin{remark}{A remark on \namecite{goswami_qsp} Frechet structures:} Proposition~\ref{prop_sobolev_norm_bound_on_theta}, along with the growth bounds on $\ell(\grad^k\phi)$ suggests that convergence of the stochastic integrals can approached via a variant of complete smoothness regularity condition introduced by \cite{goswami_qsp}. In absence of the group action, the Frechet space structure on $k_0$ has to be obtained by other methods -- in this setting $\grad^k$ is the natural candidate for defining the Sobolev norms on  $d\Fc\subset k_0$. Some work is required to strap that into the Frechet machinery, since \cite{goswami_qsp} require norms to come from derivations on $\Ac=\C(M)$, but covariant derivative is a derivation on the tensor bundle. Note that the usual Sobolev norms $\norm{\cdot}_{W^{k, \infty}}$ can be used to obtain estimates of the type needed: for instance for estimate \ref{estimate_creation_annihilation_terms}, $i)$, using that $\norm{ab}_{W^{p, \infty}}\leq \norm{a}_{W^{p, \infty}}\norm{b}_{W^{p, \infty}}$ yields a slightly weaker bound of form $M_{f, x}^{n^2}$ which is still enough to get convergence. 
\end{remark}
\begin{remark}{A remark on \namecite{belton2015algebraic} construction:} The polynomial bounded growth condition that is leveraged for the Laplacian is similar to one obtained by \cite{belton2015algebraic}. However, this growth cannot be bound on the entire Hilbert space as operator $\Theta(a)$ is only defined on $\C^\infty(M)$, and it becomes necessary to work with a dense subspace and utilize some mild continuity to push the necessary estimates through. The continuity is also needed for the extension of the squareroot trick.     
\end{remark}

\hsep

\appendix
\section{Laplacian on products}\label{section_laplacian_products}

\begin{prop}[Laplacian on products] For $\phi_j,\phi_i$,
\begin{enumerate}[label=\roman*]
\item $\laplace(\phi_i\phi_j) = \phi_i\laplace(\phi_j) + \phi_i\laplace(\phi_j) + 2\inner{\grad \phi_i, \grad \phi_j}$
\item For any $k$, \begin{align*}\laplace\inner{\grad^k\phi_i, \grad^k\phi_j} = \inner{\grad^k\laplace\phi_i, \grad^k\phi_j} + \inner{\grad^k\phi_i, \grad^k\laplace\phi_j} + 2\inner{\grad^{k+1}\phi_i, \grad^{k+1}\phi_j}\end{align*}
\end{enumerate}
\end{prop}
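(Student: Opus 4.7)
The plan is to treat (i) as a warm-up for (ii). For (i), I would expand $\laplace(\phi_i\phi_j) = -g^{ab}\grad_a\grad_b(\phi_i\phi_j)$ using metric compatibility and the Leibniz rule for $\grad$. Since $\grad$ acts as a derivation on the product $\phi_i\phi_j$, one gets $\grad_b(\phi_i\phi_j) = (\grad_b\phi_i)\phi_j + \phi_i\grad_b\phi_j$, and then applying $\grad_a$ produces exactly two mixed terms $\grad_a\phi_i\grad_b\phi_j + \grad_b\phi_i\grad_a\phi_j$, which after contraction with $-g^{ab}$ combine to $-2g^{ab}\grad_a\phi_i\grad_b\phi_j = -2\inner{\grad\phi_i,\grad\phi_j}$, plus the two obvious $\phi\laplace\phi$ terms; the sign convention $\laplace=-g^{ab}\grad_a\grad_b$ makes the cross term appear with the sign shown in the statement (modulo the obvious typo in the statement where one $\phi_i$ should be $\phi_j$).

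For (ii) the strategy is to first prove a general Bochner-type product rule for the pointwise inner product of two $(p,q)$-tensors of the same type, namely
\begin{align*}
\laplace\inner{S, T} = \inner{\laplace S, T} + \inner{S, \laplace T} + 2\inner{\grad S, \grad T},
\end{align*}
and then specialize to $S = \grad^k\phi_i$, $T = \grad^k\phi_j$. This general identity is proved exactly as in (i): the inner product is $\inner{S,T} = g^{i_1j_1}\cdots g^{i_pj_p}g_{m_1n_1}\cdots g_{m_qn_q}S^{m_1\dots}_{i_1\dots}T^{n_1\dots}_{j_1\dots}$, metric compatibility lets every copy of $g$ pass through $\grad_a\grad_b$, and $\grad_a\grad_b$ acts on the product $S\cdot T$ via the Leibniz rule, producing two symmetric mixed terms whose contraction with $-g^{ab}$ gives $2\inner{\grad S,\grad T}$.

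The final step is to replace $\laplace\grad^k\phi$ by $\grad^k\laplace\phi$. This is the only place where Ricci-flatness enters, and it is exactly the content of lemma~\ref{lemma_commutator_laplace_gradient_i}: iterating $[\laplace,\grad_i](\grad_{j_{k-1}}\cdots\grad_{j_1}\phi) = 0$ gives $\laplace\grad^k\phi = \grad^k\laplace\phi$, and by definition $\grad(\grad^k\phi) = \grad^{k+1}\phi$. Combining these substitutions into the Bochner identity yields the stated formula.

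The only mildly delicate point is bookkeeping the metric contractions in the general inner-product Bochner identity — making sure that the $g$'s commuting past $\grad_a\grad_b$ really does not produce extra curvature terms (which it does not, purely because $\grad g = 0$), and that the two cross terms are genuinely equal so they combine into the factor $2$. Everything else is either the Leibniz rule or a direct appeal to the Ricci-flat commutator lemma already established.
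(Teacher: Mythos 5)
Your proposal takes essentially the same route as the paper: prove the Bochner-type product rule $\laplace\inner{S,T} = \inner{\laplace S, T} + \inner{S,\laplace T} + 2\inner{\grad S, \grad T}$ for tensors of the same type via Leibniz plus metric compatibility, then apply it to $S=\grad^k\phi_i$, $T=\grad^k\phi_j$, and commute $\laplace$ past $\grad^k$ using Ricci-flatness. One small wrinkle: with the paper's stated convention $\laplace = -g^{ab}\grad_a\grad_b$ the cross term actually comes out as $-2\inner{\grad\phi_i,\grad\phi_j}$, which does not literally match the $+2$ in the proposition statement; the paper's own proof uses $g^{ab}\grad_a\grad_b$ and then remarks that the negative convention flips that sign, and elsewhere explicitly waives sign-tracking since only magnitudes matter for the growth bounds. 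Your parenthetical claim that the negative convention "makes the cross term appear with the sign shown in the statement" has the sign backward, but this is the same harmless looseness the paper itself commits, and the substance of the proof is correct and matches the paper's.
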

\begin{proof} With $\laplace=g^{ab}\grad_a\grad_b$, $M$ Ricci flat, using $g^{ij}=g^{ji}, \grad_a(\phi_i\phi_j)=\phi_j\grad_a(\phi_i) + \grad_a(\phi_j)\phi_i$ yields: \begin{align*} \laplace(\phi_i\phi_j) &= g^{ab}\grad_a\grad_b(\phi_i\phi_j) = g^{ab}(\grad_a\grad_b(\phi_i\phi_j)) =  g^{ab}(\grad_a(\phi_i\grad_b(\phi_j)+ \phi_j\grad_b(\phi_i)))\\
&= f\laplace(\phi_j) + \phi_j\laplace(\phi_i) + 2g^{ab}\grad_a(\phi_i)\grad_b(\phi_j) = f\laplace(\phi_j) + \phi_j\laplace(\phi_i) + 2\inner{\grad \phi_i, \grad \phi_j}
\end{align*}
For $\laplace = -g^{ab}\grad_a\grad_b$, we get $-2\inner{\grad \phi_i, \grad \phi_j}$ instead for the mixed term. 

\noindent To see $ii)$, note that by $i)$ and for $k$-tensors $f, h$ with multi-indices $a, b$, $g^{ab}:=g^{a_1b_1}\dots g^{a_kb_k}$, we have \begin{align*}\laplace\inner{f, h} = \laplace g^{ab}f_ah_b &= g^{ab}\laplace f_ah_b = g^{ab}f_a \laplace h_b  + g^{ab}(\laplace f_a)h_b + 2g^{ab}g^{cd}\grad_c f_a \grad_d h_b\\ 
&= \inner{\laplace f, h} + \inner{f, \laplace h} + 2 \inner{\grad f, \grad h}\end{align*} 
and then since$[\grad, \laplace]=0$ as $M$ is Ricci-flat, as needed 
\begin{align*}\laplace\inner{\grad^k\phi_i, \grad^k\phi_j} = \inner{\grad^k\laplace\phi_i, \grad^k\phi_j}
  + \inner{\grad^k\phi_i, \grad^k\laplace\phi_j} + \inner{\grad^{k+1}\phi_i, \grad^{k+1}\phi_j}
\end{align*}
 
\end{proof}


\section{Appendix: Map-valued Evans-Hudson quantum sde's}\label{section_mapvalued_qsdes}

Suppose $\Ac_0$ is a  dense $*$-subalgebra inside the \CStar-algebra $\Ac\subset \Bc(\Hs)$, $k_0$ is the noise space with $k:=L^2(\RR_+, k_0), k_t:=L^2((0, t), k_0), k^t:=L^2((t, \infty), k_0)$. $\Fock=\Fock_s(k)$. $\Fock_t := \Fock(L^2((0, t), k_0), \Fock^t := \Fock(L^2((t, \infty), k_0)$. Similarly, $f\in k, f^t, f_t$ are projections onto $k^t, k_t, f_t:=f\one_{[0, t]}, f^t:f\one_{[t, \infty)}$. $\hat k_0 := \CC\oplus K_0$. 

We will only work with annihilation and creation processes since the conservation process in the Laplacian generated processes is zero. For a map $\txt{Lin}(D_0, \Hs\tsr k_0)$, $D_0\DenseIn \Hs$, if $R(u):=a\tsr b$, then  for $\Delta\subset (t, \infty)$, using the mapping $k_0\ni b\to b\one_\Delta\in k_t$, $R^\Delta_t$ is defined by $$
D_0\tsr \Fock_t \ni u\tsr \psi \to R^\Delta_t(u\tsr \psi) := a\tsr \psi \tsr (b\one_{\Delta}) \in \Hs\tsr \Fock_t\tsr k^t 
$$
The associated creation process creates $k_0$ component of $R$ on interval $\Delta$: $$
a_R^\dagger(\Delta):=a^\dagger(R_t^\Delta)$$
The corresponding annihilation process is defined by using $k_0$ component of $R$ to annihilate: $$
(D_0\tsr \Fock_t)\tsr \Fock^t \ni u_t\ExpHVec(f^t) \to a_R(\delta)(u_t\ExpHVec(f^t)) = \left(\left(\int_\Delta \inner{R, f(s)}ds\right)u_t\right)\ExpHVec(f^t)
$$
where $\inner{R, f(s)}$ is viewed as an operator(\FRem{\cite[pg~93]{goswami_qsp}}) on $\Hs$.    

The Hudson-Parthasarathy quantum stochastic calculus on Hilbert spaces\FRem{\cite[\S~5.2]{goswami_qsp}} are set in the
the Schr\"{o}edinger picture of quantum dynamics. Suppose $(H_t)_{t\geq 0}$ is a family of linear operators on $\Hs\tsr \Fock$ with $\{v f_t^{\tsr n} \psi^t \}\subset \Dom(H_t)$ for $v\in D_1\DenseIn \Hs, f_t\in k_t$, $f_t$ simple, right continuous and valued in $V\DenseIn k_0, \psi^t\in \Fock^t$, with $H_t = \hat H_t \tsr \one_{\Fock^t}$ for some map $\hat H_t: \{\Hs\tsr \ExpHVec(k_t)\}\supset \Dom(\hat H_t)\to \Hs\tsr \Fock$, such that $\sup_{-\leq s\leq t}\norm{H_s(u\ExpHVec(f))}\leq \norm{r_tu}$ for all $t$, where $r_t$ depending on $t,f$ only is a closable map in $ \Lin(D_1, H')$ for some Hilbert space $H'$ depending only on $f$. 


If $H_t$ is simple, that is, $H_t=\sum_{i=0}^mH_{t_i}\one_{[t_i, t_{i+1})}(t)$, $0=t_0<\dots t_m < t_{m+1}=\infty$, then for $M$ as one of the fundamental processes\footnote{The integral with respect to the conservation process is not treated here, but the treatment is analogous} $a_R, a^\dagger_R, t\one$, $$
\int_0^t H_sM(ds) = \sum_0^m H_{t_i}M([t_i, t_{i+1})\cap [0, t])
$$ 
For general processes the integral is obtained as an appropriate limit. 

The Heisenberg formalism is captured by operator valued processes: for an adapted, regular process, $Y(t):\Ac\tsr \Fock \supset \Dom(Y(t))\to \Ac\tsr \Fock$, $\wtilde{Y(t)}: \Ac\tsr k_0 \tsr \Fock_t \supset \Dom(\wtilde{Y(t)})\to \Ac\tsr \Fock_t\tsr k_0$ defined by $\wtilde{Y(t)} = (Y(t)\tsr \one_{k_0})\SwapOp{23}$ where $\SwapOp{23}(a_1\tsr a_2 \tsr a_3) = a_1\tsr a_3\tsr a_2$, $$
\wtilde{Y(s)}:\Ac\tsr k_0\tsr \ExpHVec(k_s) \xrightarrow{\SwapOp{23}\txt{ }} \Ac\tsr  \ExpHVec(k_s)\tsr k_0 \xrightarrow{Y(s)\tsr \one} \Ac\tsr  \Fock(k_s)\tsr k_0  
$$
The map-valued processes for any $\delta:\Ac_0\to \Ac\tsr k_0$ are defined as follows\FRem{\cite[pg~134]{goswami_qsp}} \begin{align}
a_\delta(\Delta)(\sum_i x_i\tsr \ExpHVec(f_i))u &:= \sum_i a_{\delta(x_i^*)}(\Delta)(u\ExpHVec(f_i)),\ \ a^\dagger_\delta(\Delta)(\sum_i x_i\tsr \ExpHVec(f_i))u := \sum_i a_{\delta}^\dagger(x_i)(\Delta)(u\ExpHVec(f_i))\label{eq_fundamental_processes1} \\
I_\Lc(\Delta)(\sum_i x_i\tsr \ExpHVec(f_i))u &:= \sum_i |\Delta|(\Lc(x_i)u)\tsr \ExpHVec(f_i)\label{eq_fundamental_processes2}
\end{align}
and the map-valued integrals are defined by\FRem{\cite[136,137]{goswami_qsp}}: with $u\in \Hs, f\in L^4_{\txt{loc}}, x\in \Ac$,\begin{align}\label{eqs_deinition_mapvalued_integrals}
\left(\int_0^t Y(s)\circ (a_\delta + I_\Lc)(ds)   \right)(x\tsr \ExpHVec(f))u &= \int_0^tY(s)\left((\Lc(x) + \inner{\delta(x^*), f(s)}\tsr \ExpHVec(f)\right)u\ ds \\ 
\left(\int_0^t Y(s)\circ (a^\dagger_\delta)(ds)   \right)(x\tsr \ExpHVec(f))u &= \left(\int_0^t a^\dagger_{\wtilde{Y}, x}(ds)\right)u\ExpHVec(f)\label{eq_ytilde_integrator_in_terms_of_creation}
\end{align}
where \begin{align}\label{eq_ytilde_integrator}
a^\dagger_{\wtilde{Y}, x}(s)(u\ExpHVec(f_s)):=\wtilde{Y(s)}(\delta(x)\tsr \ExpHVec(f_s))u
\end{align}
\begin{remark}\label{remark_extending_ytilde} A standard technique involves using iterated integrals arising through Picard iteration, requiring iteration of the map $\delta:\Ac\to k_0$ of the form $\delta(\delta(x)) := (\delta\tsr \one)(\delta (x)):\Ac\to \Ac\tsr k_0\tsr k_0$, more generally, $\delta^m:\Ac\to \Ac\tsr k_0^{\tsr m}$. To this, $\wtilde Y$ is extended as follows -- $$
\wtilde{Y(s)}:\Ac\tsr \left(\oplus_m k_0^{\tsr m}\right)\tsr \ExpHVec(k_s) \xrightarrow{\SwapOp{23}\txt{ }} \Ac\tsr  \ExpHVec(k_s)\tsr \left(\oplus_m k_0^{\tsr m}\right) \xrightarrow{Y(s)\tsr \one} \Ac\tsr  \Fock(k_s)\tsr \left(\oplus_m k_0^{\tsr m}\right)
$$
To make sense of the creation process on $k_0^{\tsr m}$, one just needs an identification $k_0^{\tsr m}\to k_t$, a natural choice is via $a_1\tsr a_2\dots a_m \to \sum_{i\in [0:m-1]}a_i\one_{i|\Delta|+\Delta}$, where the tensor product leads to the interpretation as a partition refinement.

\end{remark}

Now assume $\Lc(\Ac^\infty)\subset \Ac_\infty$. 
With $\Vc_0=(k_0)_\infty$, define \begin{itemize}
\item $\Vc_t=\{\Vc_0\txt{-valued simple functions in }k_t\}$
\item $\Vc=\{\Vc_0\txt{-valued simple functions}\}$
\end{itemize}
One defines a map-valued integrable process with respect to $a_\delta, a^\dagger_\delta, I_\Lc$ as follows: 
\begin{definition}\label{defn_integrabile_process}
An integrable completely smooth map-valued process is an adapted process $(Y(s))_{s\geq 0}:\Ac_\infty\tsr \ExpHVec(\Vc)\to \Ac\tsr \Fock(k)$ such that:\begin{enumerate}
\item For each $t\geq 0, f\in \Vc$, $Y(t)(a\tsr\ExpHVec(f))\in (\Ac\tsr \Fock(k))_\infty$ and the map following map is completely smooth:\begin{align}
\Ac_\infty\ni a\to \Omega_{t, f}(a):Y(t)(a\tsr \ExpHVec(f))\in (\Ac\tsr \Fock(k))_\infty
\end{align}  
\item For every $m\in \NN$ and fixed $X\in \Ec^m_\infty:=(\Ac\tsr \hat k_0^{\tsr m})_\infty$, $f\in \Vc$, any separable Hilbert space $\Hs'$, the ampilation $\wtilde \Omega_{t, f}:=\Omega_{t, f}\tsr \one_{\Hs'}:(\Ac\tsr \Hs')_\infty\to (\Ac\tsr \Fock\tsr \Hs')_\infty$ is continuous.
\item For every fixed $a\in \Ac_\infty, f\in \Vc$, $\wtilde{Y(s)}:=\wtilde \Omega_{t, f}$, \begin{align}
S_a(s):\Hs\tsr \ExpHVec(\Vc_s)\ni u\ExpHVec(f_s)\to \wtilde{Y(s)}(\delta(a\tsr \ExpHVec(f_s)))u \in \Hs\tsr \Fock_s\tsr k_0
\end{align} is continuous
\item For every $a\in \Ac_\infty, \xi\in \Vc_0$ the map \begin{align} s\to Y(s)((\Lc(a)) +\inner{\delta(a^*), \xi}))\tsr \ExpHVec(f)
\end{align} is strongly continuous. 
\end{enumerate}
\end{definition}

\printbibliography[title={\sectiontitlefont{References}}]

\end{document}